\newtheorem{theorem}{Theorem}[section]    % Specify Theorem
\newtheorem{fact}[theorem]{Fact}    % Specify Fact
\newtheorem{definition}{Definition}[section] % Specify Definition
\newtheorem{claim}[theorem]{Claim}    % Specify Claim
\newtheorem{lemma}[theorem]{Lemma}    % Specify Lemma
\newcommand{\qed}{\hfill{$\rule{6pt}{6pt}$}} %Box at end of proof
\newenvironment{proof}{\noindent{\bf Proof:}}{\qed\\}
\newenvironment{proofof}[1]{\noindent{\bf Proof of #1:}}{\qed\\}
\numberwithin{equation}{section} 
\newcommand{\floor}[1]{{\lfloor #1 \rfloor}}
\newcommand{\defeq}{\stackrel{\mathrm{def}}{=}}
\newcommand{\expct}{{\mathbb E}}
\newcommand{\suppress}[1]{}
\newcommand{\comment}[1]{}
\newcommand{\eps}{\varepsilon}
\newcommand{\sR}{{{\mathsf R}}}
\newcommand{\sD}{{{\mathsf D}}}
\newcommand{\mcX}{{{\mathcal X}}}
\newcommand{\mcY}{{{\mathcal Y}}}
\newcommand{\mcZ}{{{\mathcal Z}}}
\newcommand{\mcP}{{{\mathcal P}}}
\newcommand{\ment}{{{\mathsf{ment}}}}
\newcommand{\crent}{{{\mathsf{crent}}}}
\newcommand{\cment}{{{\mathsf{rcment}}}}
\newcommand{\err}{{{\mathsf{err}}}}
\newcommand{\pub}{{{\mathsf{pub}}}}
\title{ {\bf Strong direct product conjecture holds for all relations in public coin randomized one-way communication complexity  }}
\author{Rahul Jain\thanks{Centre for Quantum Technologies and Department of Computer Science, National University of Singapore. {\tt rahul@comp.nus.edu.sg}}
}
\begin{document}

\maketitle

\abstract{
Let $f \subseteq \mcX \times \mcY \times \mcZ$ be a relation. Let the public coin one-way communication complexity of $f$, with worst case error $1/3$, be denoted $\sR^{1,\pub}_{1/3}(f)$. We show that if for computing $f^k$ ($k$ independent copies of $f$), $o(k \cdot \sR^{1,\pub}_{1/3}(f))$ communication is provided, then the success is exponentially small in $k$. This settles the strong direct product conjecture for all relations in public coin one-way communication complexity.

We show a new tight characterization of public coin one-way communication complexity which strengthens on the tight characterization shown in J., Klauck, Nayak~\cite{JainKN08}. We use the new characterization to show our direct product result and this may also be of independent interest.
}

\section{Introduction}
\label{sec-introduction}

Let $f \subseteq \mcX \times \mcY \times \mcZ$ be a relation and $\eps >0$. Let Alice with input $x \in \mcX$, and Bob with input $y \in \mcY$, wish to compute a $z \in \mcZ$ such that $(x,y,z) \in f$. We consider the model of public coin one-way communication complexity in which Alice sends a single message to Bob, and Alice and Bob may use pubic coins. Let $\sR^{1,\pub}_\eps(f)$ denote the communication of the best protocol $\mcP$ which achieves this with error at most $\eps$ (over the public coins) for any input $(x,y)$. Now suppose that Alice and Bob wish to compute $f$ simultaneously on $k$ inputs $(x_1,y_1), \ldots, (x_k, y_k)$ for some $k \geq 1$. They can achieve this by running $k$ independent copies of $\mcP$ in parallel . However in this case the overall success could be as low as $(1 - \eps)^k$. Strong direct product conjecture for $f$ states that this is roughly the best that Alice and Bob can do. We show that this is indeed true for all relations.
\begin{theorem}
\label{thm:main}
Let $f \subseteq \mcX \times \mcY \times \mcZ$ be a relation. Let $k \geq 1$ be a natural number. Then,
$$ \sR^{1,\pub}_{1 - 2^{-\Omega(k)}}(f^k)  \geq   \Omega( k \cdot \sR^{1,\pub}_{1/3}(f)  )  \enspace . $$
\end{theorem}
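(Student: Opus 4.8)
The plan is to route the whole argument through an information-theoretic measure $\mathsf{M}_\eps(f)$ that tightly characterizes $\sR^{1,\pub}_\eps(f)$ and --- unlike the subdistribution-type bound of \cite{JainKN08} --- behaves robustly under small perturbations of both the underlying input distribution and the error parameter. First I would define $\mathsf{M}_\eps(f)$, essentially as the maximum over distributions $\mu$ on $\mcX\times\mcY$ of the least information Alice's message must carry about her input in a one-way protocol of error at most $\eps$ under $\mu$, suitably smoothed so as to be stable, and prove the tight estimate $\mathsf{M}_{1/3}(f)=\Theta\big(\sR^{1,\pub}_{1/3}(f)\big)$. One inequality is immediate, since communication upper bounds information; the reverse --- a low-information one-way protocol can be turned into a short one --- is a single-message compression / correlated-sampling argument, and is where the sharpness of the characterization (constant factors rather than logarithmic ones) is won.

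Next I would reduce Theorem~\ref{thm:main} to the single statement
$$ \mathsf{M}_{\,1-2^{-\Omega(k)}}(f^k)\;\geq\;\Omega(k)\cdot \mathsf{M}_{1/3}(f). $$
Given this, any one-way public-coin protocol for $f^k$ of worst-case error $1-2^{-\Omega(k)}$ and communication $T$ satisfies, since $\mathsf{M}$ lower bounds communication at the corresponding error level, $\;T\ge \Omega\big(\mathsf{M}_{1-2^{-\Omega(k)}}(f^k)\big)\ge \Omega(k)\cdot\mathsf{M}_{1/3}(f)=\Omega\big(k\cdot\sR^{1,\pub}_{1/3}(f)\big)$, the last equality by the characterization (with a trivial case analysis when $\sR^{1,\pub}_{1/3}(f)=O(1)$).

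To prove the displayed inequality I would fix a distribution $\mu$ on $\mcX\times\mcY$ witnessing $\mathsf{M}_{1/3}(f)$ and take $\mu^k$ as the candidate hard distribution for $f^k$. Given a communication-$T$ protocol for $f^k$ under $\mu^k$ with message $M$ --- which in the one-way model is the entire transcript --- the chain rule gives $\sum_i I(X_i:M\mid X_{<i})=I(\vec X:M)\le T$, so an average coordinate's message carries only $\approx T/k$ bits about its input. Now process the coordinates $1,\ldots,k$ in turn: whenever the conditional probability that copy $i$ is solved correctly, given that copies $1,\ldots,i-1$ were, drops below $2/3$, the all-copies success $\delta\ge 2^{-\Omega(k)}$ gets accounted for after only $O(\log(1/\delta))=O(k)$ such steps, so all but a small constant fraction of coordinates (the fraction governed by the constant hidden in $2^{-\Omega(k)}$) have this conditional success above $2/3$; simultaneously, conditioning on those $O(k)$ bits of ``correctness so far'' together with the already-revealed inputs inflates the per-coordinate information cost and distorts the per-coordinate input distribution away from $\mu$ by only $O(1)$ on average. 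Hence some coordinate $i$ is good in all three senses at once, and restricting the protocol to it --- Alice plays $X_i$, Bob plays $Y_i$, the remaining inputs sampled from the appropriate conditional distribution via public coins, Alice sends $\mcP$'s message, Bob outputs copy $i$ --- yields a one-way protocol for a single copy of $f$ with error below $1/3$, whose message carries only $O(T/k)+O(1)$ bits about its input, under a distribution $\mu'$ close to $\mu$; by the robustness of $\mathsf{M}$ this forces $\mathsf{M}_{1/3}(f)\le O(T/k)+O(1)$, i.e. $T\ge\Omega(k)\cdot\mathsf{M}_{1/3}(f)$.

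The hard part --- and precisely what the strengthening over \cite{JainKN08} is for --- is making these three requirements hold for one and the same coordinate: the conditioning that boosts the per-copy success above $2/3$ is in tension with keeping the conditional input distribution close to $\mu$ and with keeping the conditional information cost near $T/k$. A measure that is not stable under the resulting $O(1)$-scale perturbations in distribution and error cannot absorb these losses into the $\Omega(\cdot)$ constants, which is why the subdistribution characterization of \cite{JainKN08} does not by itself yield the theorem; I would handle it by building the requisite stability into $\mathsf{M}$ from the outset and by averaging carefully over which prefix of coordinates to freeze, so that a single coordinate survives all three tests. I expect the bulk of the technical effort to sit in this balancing act and in the compression step underlying the tight characterization.
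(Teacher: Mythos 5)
Your plan lines up with the paper's architecture at the top level: introduce an information-style quantity that tightly characterizes $\sR^{1,\pub}_{1/3}$, prove the characterization via a one-shot message compression step (the paper uses Braverman--Rao), and establish a direct product for the characterizing quantity by conditioning on ``success so far'' and locating a surviving good coordinate. You also correctly identify the central tension: the subdistribution/relative-min-entropy bound of \cite{JainKN08} is not stable enough under the perturbations introduced by conditioning on a $2^{-\Theta(k)}$-probability success event, which is exactly why the paper replaces the worst-case $S_\infty(\lambda\|\mu)$ by a $\delta$-robust conditional log-ratio bound $\cment^\mu_\delta(\lambda)$, the minimum $c$ with $\Pr_{\lambda}\left[\log\frac{\lambda(x|y)}{\mu(x|y)}>c\right]\le\delta$. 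Two substantive gaps remain, however.

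First, your embedding step --- ``Alice plays $X_i$, Bob plays $Y_i$, the remaining inputs sampled from the appropriate conditional distribution via public coins'' --- does not go through as stated. After conditioning the one-way hard distribution $\lambda$ on $T=1$ (success on the already-chosen coordinates), the joint law of $(X_{-i},Y_{-i})$ is no longer a product of a publicly known marginal on Alice's side with $\mu^{k}(\cdot\,|\,x)$ on Bob's side: $T$ depends on both $X$ and $Y$, so conditioning introduces cross-player correlations that a public coin cannot synthesize. The paper never builds a single-copy protocol at this point; it stays at the distribution level and introduces a uniformly random per-coordinate ``selector'' $D\in\{0,1\}^k$, then conditions on $(D_{-j},U_{-j})$ --- i.e.\ on either $X_m$ or $Y_m$ for each surviving coordinate $m\ne j$ --- which restores the needed product structure (``conditioned on $(Y^1)_j$ the remaining distribution is product across $\mcX^k$ and $\mcY^k$'') and lets it bound $\err_f$ of the induced coordinate-$j$ distribution directly via $\cment^{\mu}_{\eps,\eps}(f)$. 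You need this (or an equivalent) device; otherwise the claim that the residual inputs can be publicly sampled is false in general.

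Second, ``suitably smoothed'' information is where the theorem actually lives, and leaving it unspecified hides both the form of stability required and a structural wrinkle in the paper's argument: the paper does not prove a self direct product for a single measure. Theorem~\ref{thm:dptment} is a lower bound of the form $\ment^{\mu^k}_{1-(1-\eps/2)^{\floor{\delta k}}}(f^k)\ge \delta k\cdot\cment^{\mu}_{\eps,\eps}(f)$, i.e.\ it passes from the (unrobust, communication-lower-bounding) measure $\ment$ on the product side to the (robust, communication-upper-bounding) measure $\cment$ on the single-copy side, and then closes the loop using both inequalities of Theorem~\ref{thm:newchar} together with Yao's principle. A single ``robustified'' $\mathsf{M}$ can in principle play both roles, but only if the robustness is of exactly the high-probability-log-ratio type: the compression lemma needs a pointwise tail bound on $\log\frac{\theta(i|x)}{\theta(i|y)}$, not merely small expected information $I(X:M)$, and the conditioning-on-success step needs the additivity $S_\infty(X^1Y^1\|XY)\le \log\frac{1}{\Pr[T=1]}$ together with the chain rule for relative entropy to distribute the resulting $O(\delta k)$ slack over coordinates. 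As written, your $\mathsf{M}$ is an expected mutual information, and passing from that to the tail bound required for compression already costs a Markov step whose parameters you would need to track against the error amplification --- so the claimed $\mathsf{M}_{1/3}(f)=\Theta(\sR^{1,\pub}_{1/3}(f))$ is plausible but not yet an argument. Pinning down the definition as the paper does (robust conditional relative min-entropy) resolves both of these at once.
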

We show this result by showing a new tight characterization of public coin one-way communication complexity for all relations. We introduce a new measure of complexity which we call the robust conditional relative min-entropy bound. We show that this bound is equivalent, up to constants, to $\sR^{1,\pub}_{1/3}(f)$ and use this to show the direct product result.  This bound forms lower bound on the one-way subdistribution bound of J., Klauck, Nayak~\cite{JainKN08} where they show that their bound is equivalent, up to constants, to $\sR^{1,\pub}_{1/3}(f)$. They also showed that the one-way subdistribution bound satisfies the direct product property under product distributions. 

There has been substantial prior work on the strong direct product question and the weaker direct sum and weak direct product questions in various models of communication complexity, e.g.~\cite{ImpagliazzoRW94, ParnafezRW97,  ChakrabartiSWY01, Shaltiel03, JainRS03a, KlauckSdW04, Klauck04, JainRS05a,  BeamePSW07, Gavinsky08, JainKN08, JainK09,  HarshaJMR09, BarakBCR10, BravermanR10, Klauck10}.

In the next section we provide some information theory and communication complexity preliminaries that we need. We refer the reader to the texts~\cite{CoverT91, KushilevitzN97} for good introductions to these topics respectively. In section~\ref{sec:newbounds} we introduce our new bound. In section~\ref{sec:newchar} we show that it  tightly characterizes public coin one-way communication complexity.  Finally in section~\ref{sec:dpt} we show our direct product result. 

\section{Preliminaries}
\subsection*{Information theory}
\label{sec:inf}
Let $\mcX, \mcY$ be sets and $k$ be a natural number. Let $\mcX^k$ represent $\mcX \times \cdots \times \mcX$, $k$ times. Let $\mu$ be a distribution over $\mcX$ which we denote by $\mu \in \mcX$.  We use $\mu(x)$ to represent the probability of $x$ under $\mu$. The entropy of $\mu$ is defined as $S(\mu) = - \sum_{x \in \mcX} \mu(x) \log \mu(x)$. Let $X$ be a random variable distributed according to $\mu$ which we denote by $X \sim \mu$. We use the same symbol to represent a random variable and its distribution whenever it is clear from the context. For distributions $\mu, \mu_1 \in \mcX$, $\mu \otimes \mu_1$ represents the product distribution $(\mu \otimes \mu_1)(x) = \mu(x) \otimes \mu_1(x)$ and $\mu^k$ represents $\mu \otimes \cdots \otimes \mu$, $k$ times. The $\ell_1$ distance between distributions $\mu, \mu_1$ is defined as $||\mu - \mu_1||_1 = \frac{1}{2} \sum_{x \in \mcX} |\mu(x) - \mu_1(x)|$. Let $\lambda, \mu \in \mcX \times \mcY$.  We use $\mu(x | y)$ to represent $\mu(x,y)/\mu(y)$. When we say $XY \sim \mu$ we assume that $X \in \mcX$  and $Y \in \mcY$. We use $\mu_x$ and $Y_x$ to represent $Y |~ X =x$.  The conditional entropy of $Y$ given $X$, is defined as $S(Y|X) = \expct_{x \leftarrow X} S(Y_x)$. The relative entropy between $\lambda$ and $\mu$ is defined as $S(\lambda || \mu) = \sum_{x \in \mcX} \lambda(x) \log \frac{\lambda(x)}{\mu(x)}$. We use the following properties of relative entropy at many places without explicitly mentioning.
\begin{fact}
\label{fact:relprop}
\begin{enumerate}
\item Relative entropy is jointly convex in its arguments, that is for distributions $\lambda_1, \lambda_2,\mu_1, \mu_2$
$$ S(p\lambda_1 + (1-p)\lambda_2~ ||~ p\mu_1 + (1-p) \mu_2) \leq p\cdot S(\lambda_1 || \mu_1) + (1-p) \cdot S(\lambda_2 || \mu_2) \enspace .$$
\item Let $XY, X^1Y^1 \in \mcX \times \mcY$. Relative entropy satisfies the following chain rule, 
$$S(X Y || X^1 Y^1 ) = S(X || X^1) + \expct_{x \leftarrow X} S(Y_x || Y^1_x) \enspace .$$
This in-particular implies, using joint convexity of relative entropy,
\begin{align*}
S(X Y || X^1 \otimes Y^1 ) & = S(X || X^1)  +  \expct_{x \leftarrow X} S(Y_x || Y^1) \geq S(X || X^1)  + S(Y || Y^1)  \quad .
\end{align*}
\item For distributions $\lambda, \mu$ : $||\lambda - \mu ||_1 \leq \sqrt{S(\lambda || \mu)}$  and $S(\lambda || \mu) \geq 0$.
\end{enumerate}
\end{fact}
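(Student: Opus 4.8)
The plan is to prove the three items of Fact~\ref{fact:relprop} in turn, using only elementary tools: the inequality $\ln t \le t-1$ for $t>0$, the log-sum inequality, and (for the last item) a reduction to the two-point case followed by a short calculus estimate. For item~1 I would first record the log-sum inequality: for nonnegative reals $a_1,\dots,a_n$ and positive reals $b_1,\dots,b_n$, $\sum_i a_i \log(a_i/b_i) \ge \big(\sum_i a_i\big)\log\big((\sum_i a_i)/(\sum_i b_i)\big)$, which is immediate from Jensen's inequality applied to the convex map $t \mapsto t\log t$. Applying this at each $x$ with $n=2$ and $(a_1,a_2,b_1,b_2) = (p\lambda_1(x),(1-p)\lambda_2(x),p\mu_1(x),(1-p)\mu_2(x))$ gives, pointwise, $(p\lambda_1(x)+(1-p)\lambda_2(x))\log\frac{p\lambda_1(x)+(1-p)\lambda_2(x)}{p\mu_1(x)+(1-p)\mu_2(x)} \le p\lambda_1(x)\log\frac{\lambda_1(x)}{\mu_1(x)} + (1-p)\lambda_2(x)\log\frac{\lambda_2(x)}{\mu_2(x)}$, and summing over $x$ yields joint convexity. (Equivalently, one checks that the perspective map $(a,b) \mapsto a\log(a/b)$ on $\reals_{>0}^2$ has positive semidefinite Hessian.)

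For item~2 I would simply expand the definition, $S(XY || X^1Y^1) = \sum_{x,y} X(x) Y_x(y) \log\frac{X(x)Y_x(y)}{X^1(x)Y^1_x(y)}$, split the logarithm as $\log\frac{X(x)}{X^1(x)} + \log\frac{Y_x(y)}{Y^1_x(y)}$, and sum: the first piece collapses, on summing $Y_x$ over $y$, to $\sum_x X(x)\log\frac{X(x)}{X^1(x)} = S(X || X^1)$, and the second is exactly $\expct_{x \leftarrow X} S(Y_x || Y^1_x)$. Specializing $X^1Y^1$ to the product $X^1 \otimes Y^1$ makes $Y^1_x = Y^1$ for every $x$, which is the displayed equality. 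For the inequality, the marginal of $XY$ on $\mcY$ is $Y = \expct_{x \leftarrow X} Y_x$, so item~1 (iterated to arbitrary convex combinations, with the second argument held fixed at $Y^1$) gives $\expct_{x \leftarrow X} S(Y_x || Y^1) \ge S(\expct_{x \leftarrow X} Y_x \,||\, Y^1) = S(Y || Y^1)$.

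For item~3, non-negativity is Gibbs' inequality: if $\mu$ dominates $\lambda$ then, writing $\log = \log_2$ and using $\ln t \le t-1$, $-S(\lambda || \mu) = \tfrac{1}{\ln 2}\sum_{x \in \support(\lambda)} \lambda(x) \ln\frac{\mu(x)}{\lambda(x)} \le \tfrac{1}{\ln 2}\sum_{x \in \support(\lambda)} (\mu(x) - \lambda(x)) = \tfrac{1}{\ln 2}\big(\sum_{x \in \support(\lambda)} \mu(x) - 1\big) \le 0$, while if $\mu$ does not dominate $\lambda$ then $S(\lambda || \mu) = +\infty$. For Pinsker's inequality --- recall $\|\lambda - \mu\|_1$ here denotes $\tfrac12 \sum_x |\lambda(x)-\mu(x)|$, call it $\delta$ --- I would reduce to two points: partition $\mcX$ into $A = \{x : \lambda(x) \ge \mu(x)\}$ and its complement; the binary distributions $(\lambda(A),\lambda(A^c))$ and $(\mu(A),\mu(A^c))$ have $\ell_1$ distance exactly $\delta$, and by the log-sum inequality $S(\lambda || \mu)$ does not increase under this coarsening, so it suffices to show $\delta^2 \le h(p)$, where $p = \lambda(A)$, $q = \mu(A)$, $\delta = |p-q|$, and $h(p) = p\log\tfrac{p}{q} + (1-p)\log\tfrac{1-p}{1-q}$. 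Holding $q$ fixed one computes $h(q) = 0$, $h'(q) = 0$, and $h''(p) = \tfrac{1}{\ln 2}\cdot\tfrac{1}{p(1-p)} \ge \tfrac{4}{\ln 2}$, so Taylor's theorem gives $h(p) \ge \tfrac{2}{\ln 2}(p-q)^2 \ge 2\delta^2 \ge \delta^2$, i.e.\ $\|\lambda - \mu\|_1 = \delta \le \sqrt{S(\lambda || \mu)}$.

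None of these steps is deep; the only place that needs care is the reduction to two points in Pinsker's inequality, which relies on the monotonicity of relative entropy under coarsening (a data-processing property) --- itself just another application of the same log-sum inequality used in item~1 --- so items~1 and~3 rest on one elementary lemma. The single genuinely computational ingredient is the scalar bound $h(p) \ge \tfrac{2}{\ln 2}(p-q)^2$, and even that reduces to one differentiation.
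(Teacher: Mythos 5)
Your proofs are correct. The paper states Fact~\ref{fact:relprop} without proof, treating all three items as standard facts (with a pointer to Cover--Thomas), and your arguments --- the log-sum inequality for joint convexity, direct expansion of the logarithm for the chain rule, and Gibbs' inequality plus the two-point coarsening for Pinsker --- are exactly the standard textbook derivations; in particular the constant $2/\ln 2 > 1$ you obtain in the scalar estimate comfortably yields the weaker form $||\lambda - \mu||_1 \leq \sqrt{S(\lambda || \mu)}$ actually asserted here.
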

The relative min-entropy between $\lambda$ and $\mu$ is defined as $S_\infty(\lambda || \mu) = \max_{x \in \mcX} \log \frac{\lambda(x)}{\mu(x)}$. It is easily seen that $S(\lambda || \mu) \leq S_\infty(\lambda || \mu)$.
Let $X,Y,Z$ be random variables. The mutual information between $X$ and $Y$ is defined as 
$$I(X : Y) = S(X) + S(Y) - S(XY) = \expct_{x \leftarrow X} S(Y_x || Y) =  \expct_{y \leftarrow Y} S(X_y || X). $$ 
The conditional mutual information is defined as $I(X:Y |~ Z) = \expct_{z \leftarrow Z} I(X : Y |~ Z=z) $. Random variables $XYZ$ form a Markov chain $Z \leftrightarrow X \leftrightarrow Y$ iff $I(Y : Z  |~ X =x) = 0$ for each $x$ in the support of $X$.

\subsubsection*{One-way communication complexity} 
Let $f \subseteq \mcX \times \mcY \times \mcZ$ be a relation. We only consider complete relations that is for each $(x,y) \in \mcX \times \mcY$, there exists at least one $z \in \mcZ$ such that $(x,y,z) \in f$. In the one-way model of communication
there is a single message, from Alice with input $x \in \mcX$ to Bob with input $y \in \mcY$, at the end of which
Bob is supposed to determine an answer $z$ such that $(x,y,z) \in f$. Let~$\eps > 0$ and let $\mu \in \mcX
\times \mcY$ be a distribution. We let $\sD_{\eps}^{1,\mu}(f)$ represent the distributional one-way communication complexity of $f$ under $\mu$
with expected error $\epsilon$, i.e., the communication of the best deterministic one-way protocol for $f$, with distributional
error  (average error over  the inputs) at most
$\eps$ under $\mu$.  Let $\sR^{1,\pub}_{\epsilon}(f)$ represent the public-coin one-way communication complexity of $f$ with worst case error
$\eps$, i.e., the communication of the best public-coin one-way protocol for $f$ with error for each input $(x,y)$ being at
most~$\eps$.  The following is a
consequence of the  min-max theorem in game
theory~\cite[Theorem~3.20, page~36]{KushilevitzN97}.
\begin{lemma}[Yao principle]
\label{lem:yao} $\sR^{1,\pub}_{\epsilon}(f) = \max_{\mu}
\sD_{\epsilon}^{1,\mu}(f)$.
\end{lemma}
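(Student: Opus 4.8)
The plan is to obtain this from the minimax theorem for finite zero-sum games, exactly as the cited \cite[Theorem~3.20]{KushilevitzN97} does; I would simply record the two inequalities explicitly. For the direction $\sR^{1,\pub}_{\epsilon}(f) \geq \max_{\mu}\sD_{\epsilon}^{1,\mu}(f)$, fix a distribution $\mu \in \mcX \times \mcY$ and an optimal public-coin one-way protocol $\mcP$ for $f$ with worst-case error $\epsilon$ and communication $c = \sR^{1,\pub}_{\epsilon}(f)$. Write $\mcP$ as a distribution over deterministic one-way protocols $\mcP_r$ indexed by the public string $r$, and let $\mathrm{err}_{(x,y)}(\mcP_r) \in \{0,1\}$ be the indicator that $\mcP_r$ outputs some $z$ with $(x,y,z) \notin f$. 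The worst-case guarantee gives $\expct_{r}[\mathrm{err}_{(x,y)}(\mcP_r)] \leq \epsilon$ for every $(x,y)$, hence $\expct_{(x,y) \leftarrow \mu}\expct_{r}[\mathrm{err}_{(x,y)}(\mcP_r)] \leq \epsilon$. Interchanging the two expectations, some fixing of $r$ yields a deterministic one-way protocol of communication at most $c$ with distributional error at most $\epsilon$ under $\mu$, so $\sD_{\epsilon}^{1,\mu}(f) \leq c$; maximizing over $\mu$ gives the inequality.

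For the reverse inequality I would set $c = \max_{\mu}\sD_{\epsilon}^{1,\mu}(f)$ and consider the two-player zero-sum game in which the minimizer picks a deterministic one-way protocol $\mcP'$ of communication at most $c$, the maximizer picks an input $(x,y) \in \mcX \times \mcY$, and the payoff to the maximizer is $\mathrm{err}_{(x,y)}(\mcP') \in \{0,1\}$. Since $\mcX, \mcY, \mcZ$ are finite there are only finitely many pure strategies on each side, so the minimax theorem applies and the game has a value $v$ attained on both sides. When the maximizer commits first to a distribution $\mu$ over inputs, the minimizer's best response achieves $\min_{\mcP'} \expct_{(x,y) \leftarrow \mu}[\mathrm{err}_{(x,y)}(\mcP')]$, which is at most $\epsilon$ precisely because $\sD_{\epsilon}^{1,\mu}(f) \leq c$ furnishes a protocol of communication at most $c$ with distributional error at most $\epsilon$ under $\mu$; so the maximizer-first value is at most $\epsilon$, whence $v \leq \epsilon$. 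Reading off the minimizer-first side, there is a distribution over deterministic one-way protocols of communication at most $c$, i.e. a public-coin one-way protocol of communication at most $c$, whose error is at most $\epsilon$ on every input $(x,y)$; hence $\sR^{1,\pub}_{\epsilon}(f) \leq c$.

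The only delicate point, and the step I expect to be the real content, is the appeal to the minimax theorem: it needs the strategy spaces to be (essentially) finite, or at least compact with bilinear payoff. This is why one quietly assumes $\mcX, \mcY, \mcZ$ finite --- equivalently, a one-way protocol of communication $c$ is specified by a map $\mcX \to \{0,1\}^{\leq c}$ together with a decoding map from $\mcY$ and the received message to $\mcZ$, of which there are only finitely many. Everything else is routine averaging, and since the statement is entirely standard the paper simply cites \cite{KushilevitzN97}.
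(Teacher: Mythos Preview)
Your proposal is correct and matches the paper's approach: the paper gives no proof at all, simply stating the lemma as a consequence of the min-max theorem and citing \cite[Theorem~3.20]{KushilevitzN97}, which is exactly what you anticipated. Your write-up of both directions is the standard argument and is sound.
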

The following result follows from the arguments in Braverman and Rao~\cite{BravermanR10}. We skip its proof.
\begin{lemma}[Braverman and Rao~\cite{BravermanR10}]
\label{lem:compress}
Let $f \subseteq \mcX \times \mcY \times \mcZ$ be a relation and $\eps >0$. Let $XY \sim \mu$ be inputs to a  private coins  one-way communication protocol $\mcP$ with distributional error at most $\eps$. Let $M$ represent the message of $\mcP$. Let $\theta$ be the distribution of $XYM$ and let 
$$\Pr_{(x,y,i) \leftarrow \theta} \left[ \log \frac{\theta(i|x)}{\theta(i|y)} > c \right] \leq  \delta .$$ There exists a deterministic one-way protocol $\mcP_1$ for $f$ with inputs distributed according to $\mu$, such that the communication of $\mcP_1$ is $c+O(\log(1/\delta) )$, and distributional error of $\mcP_1$ is at most $\eps + 2\delta$.
\end{lemma}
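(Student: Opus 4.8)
The plan is to obtain $\mcP_1$ by \emph{compressing} the message $M$ of $\mcP$, exploiting the quantitative closeness of $\theta(\cdot\mid x)$ and $\theta(\cdot\mid y)$ encoded by $c$ and $\delta$. Since $\mcP_1$ must be deterministic, I would first build a \emph{public-coin} one-way protocol with distributional error at most $\eps+2\delta$ under $\mu$ and worst-case communication $c+O(\log(1/\delta))$; averaging over the public coins (the easy direction behind Lemma~\ref{lem:yao}) then fixes them to a value whose induced deterministic protocol has distributional error at most $\eps+2\delta$ and the same communication. Throughout I use that $\mcP$ is one-way, so $M\leftrightarrow X\leftrightarrow Y$ is a Markov chain and $\theta(i\mid x,y)=\theta(i\mid x)$, and that both players know $\mu$ and the description of $\mcP$, hence can compute $\theta_M$, the maps $x\mapsto\theta(\cdot\mid x)$ and $y\mapsto\theta(\cdot\mid y)$, and the output function $g$ of $\mcP$.

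For the public-coin protocol, let the public randomness be an i.i.d.\ stream $(m_1,\rho_1),(m_2,\rho_2),\dots$ with $m_j\sim\theta_M$ and $\rho_j$ uniform in $[0,1]$, together with a uniformly random hash $h$ from indices to $\ell$-bit strings, where $\ell=c+O(\log(1/\delta))$ is fixed below. On input $x$, Alice runs rejection sampling against $\theta_M$: with $C_x=\sup_m \theta(m\mid x)/\theta_M(m)$ (finite since $\theta_M(m)\ge\mu(x)\,\theta(m\mid x)$), she finds the first $j^\ast$ with $\rho_{j^\ast}\le \theta(m_{j^\ast}\mid x)/\bigl(C_x\,\theta_M(m_{j^\ast})\bigr)$, searching $N_x=\lceil C_x\ln(1/\delta)\rceil$ trials and declaring failure (an error) otherwise; she sends only $h(j^\ast)$, i.e.\ $\ell$ bits. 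Standard rejection sampling makes $m_{j^\ast}\sim\theta(\cdot\mid x)$, and the failure probability is $(1-1/C_x)^{N_x}\le\delta$, uniformly in $x$. On input $y$ and received hash value $\sigma$, Bob lets $j$ range over $h^{-1}(\sigma)\cap[N_x]$ and picks the index minimising the ``virtual position'' $v_j:=\rho_j\,\theta_M(m_j)/\theta(m_j\mid y)$ — the candidate that best resembles an early sample of a rejection sampler against Bob's posterior $\theta(\cdot\mid y)$ — and outputs $g(y,m_j)$. Crucially, Alice never uses $\theta(\cdot\mid y)$; the posterior enters only through Bob's decoder.

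For correctness, note that conditioned on Alice not failing the triple $(x,y,m_{j^\ast})$ is distributed according to $\theta$ up to an $O(\delta)$-small reweighting (from conditioning on success, whose probability depends only on $x$), so with probability $1-O(\delta)$ it lies in the good set $\set{\log(\theta(i\mid x)/\theta(i\mid y))\le c}$; on that event Alice's acceptance rule forces $v_{j^\ast}\le \theta(m_{j^\ast}\mid x)/\bigl(C_x\,\theta(m_{j^\ast}\mid y)\bigr)\le 2^{c}/C_x$. The key estimate is that for every index $j$ one has $\Pr[v_j\le t]\le t$ for all $t\ge 0$ (a short computation using $m_j\sim\theta_M$ and $\sum_m\theta(m\mid y)=1$), while $\Pr[h(j)=\sigma]=2^{-\ell}$ independently of the stream; a union bound over the at most $N_x$ candidate indices then bounds the probability that Bob's chosen index differs from $j^\ast$ by $O(N_x\cdot 2^{-\ell}\cdot 2^{c}/C_x)=O(2^{\,c-\ell}\ln(1/\delta))$, which is at most $\delta$ once $\ell=c+O(\log(1/\delta))$. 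Whenever Bob recovers $j^\ast$ he outputs $g(y,m_{j^\ast})$, which is a correct answer for $(x,y)$ with the probability guaranteed by the distributional error of $\mcP$, since $(x,y,m_{j^\ast})$ is essentially $\theta$-distributed. Summing Alice's failure probability, Bob's mis-decoding probability, the mass of the bad set, and $\mcP$'s own error gives distributional error $\eps+O(\delta)$; choosing the auxiliary failure and hashing parameters to be suitably small constant fractions of $\delta$ brings this to $\eps+2\delta$, while the communication remains $c+O(\log(1/\delta))$.

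I expect the decoding step to be the main obstacle: recognising that Bob can use $y$ — and only $y$ — to single out Alice's index from its hash bucket, and that the relevant ``collision mass'' is controlled by $v_{j^\ast}\le 2^{c}/C_x$ rather than by $1/\theta(m_{j^\ast}\mid y)$, is exactly what converts the smooth closeness hypothesis between $\theta(\cdot\mid x)$ and $\theta(\cdot\mid y)$ into a message of $c+O(\log(1/\delta))$ bits. The remaining nuisance — and the place where one must follow the arguments of Braverman and Rao~\cite{BravermanR10} carefully — is that the number of trials $N_x$, hence the range $[N_x]$ that Bob searches, fluctuates with $x$ (by an amount of order $\log C_x$). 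This is handled by splitting the rejection sampling into geometrically growing ``levels'', in the style of Harsha, Jain, McAllester and Radhakrishnan~\cite{HarshaJMR09}, so that the level reached can be communicated with only $O(\log(1/\delta))$ extra bits without disturbing the bound.
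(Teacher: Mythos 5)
The paper does not actually prove this lemma---it explicitly states ``The following result follows from the arguments in Braverman and Rao~\cite{BravermanR10}. We skip its proof.''---so there is no internal proof to compare against; your sketch has to stand or fall on its own. At a high level you have the right machinery (rejection sampling over a shared stream, a short hash, Bob decoding with his own posterior $\theta(\cdot\mid y)$, and the observation that the good event bounds the relevant threshold by $2^c$), and the conversion from a public-coin to a deterministic protocol by averaging is standard and fine.

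However, there is one step that, as written, does not close. Your Bob searches $h^{-1}(\sigma)\cap[N_x]$ and minimises $v_j$, but $N_x=\lceil C_x\ln(1/\delta)\rceil$ depends on $C_x=\sup_m\theta(m\mid x)/\theta_M(m)$, a quantity that Bob cannot compute and that is \emph{a priori} unrelated to $c$ (it can be as large as $1/\mu(x)$). Without knowing $N_x$, Bob's ``pick the smallest $v_j$'' rule has no sound stopping criterion: over an unbounded range the union bound $\sum_j 2^{-\ell}\cdot(2^c/C_x)$ diverges. Your proposed fix---sending a geometric ``level'' \`a la HJMR---does not obviously rescue this, because the level needed to locate $j^\ast$ scales like $\log\log C_x$, not like $\log(1/\delta)$, so for some $x$ the extra bits can exceed the $O(\log(1/\delta))$ budget the lemma allows. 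The cleaner route (and, I believe, what the Braverman--Rao argument actually does in this one-way form) is to let the shared stream be uniform over $\mcX$-independent reference points and to define Bob's candidate set by a threshold that both parties can compute from $c$ alone, e.g.\ $S=\{j:\rho_j\le\min(1,2^c\,\theta(m_j\mid y))\}$, with Bob outputting the first $j\in S$ whose hash matches. On the good event $j^\ast\in S$, and $\expct|S\cap[j^\ast-1]|=O(2^c)$, so with $\ell=c+O(\log(1/\delta))$ the decoy probability is $O(\delta)$; no quantity depending on $C_x$ or $N_x$ ever enters the communication. A secondary, smaller issue: your bound $\Pr[v_j\le t]\le t$ is stated for the unconditional stream, but for indices $j<j^\ast$ you have conditioned on Alice rejecting, which reweights the law of $(m_j,\rho_j)$; this only costs a constant factor but should be acknowledged. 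I would also not claim the final error is exactly $\eps+2\delta$---you have at least three $\delta$-sized error sources, one of which (the measure of the bad set) is fixed by hypothesis at $\delta$ and cannot be shrunk by retuning parameters---but this constant is immaterial for the paper's applications.
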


\section{New bound}
\label{sec:newbounds}
Let $f \subseteq \mcX \times \mcY \times \mcZ$ be a relation, $\mu , \lambda \in \mcX \times \mcY$ be distributions and $\eps, \delta > 0$. 
\begin{definition}[One-way distributions]
Distribution $\lambda$  is called one-way for distribution $\mu$ if for all $(x,y)$ in the support of $\lambda$ we have $\mu(y | x) = \lambda(y | x)$.
\end{definition}

\begin{definition}[Error of a distribution]
Error of  distribution $\mu$ with respect to $f$, denoted $\err_f(\mu)$, is defined as 
 $$\err_f(\mu) \defeq \min \{\Pr_{(x,y) \leftarrow \mu}[(x,y,g(y)) \notin f] ~|~ g : \mcY \rightarrow \mcZ \} \enspace .$$
\end{definition}

\begin{definition}[Robust conditional relative min-entropy]
The $\delta$-robust conditional relative min-entropy of $\lambda$ with respect to $\mu$, denoted $\cment^{\mu}_\delta(\lambda)$, is defined to be the minimum number $c$ such that 
$$ \Pr_{(x,y) \leftarrow \lambda} \left[ \log \frac{\lambda(x|y)}{\mu(x|y )} > c \right]  \leq \delta .$$
\end{definition}

\begin{definition}[Robust conditional relative min-entropy bound]
The  $\eps$-error $\delta$-robust conditional relative min-entropy bound of $f$ with respect to distribution $\mu$, denoted $\cment^\mu_{\eps, \delta}(f)$, is defined as 
$$ \cment^\mu_{\eps, \delta}(f) \defeq \min \{ \cment^\mu_\delta(\lambda) |~ \lambda \mbox{ is one-way for $\mu$ and $\err_f(\lambda) \leq \eps$} \} \enspace. $$ 
The  $\eps$-error $\delta$-robust conditional relative min-entropy bound of $f$, denoted $\cment_{\epsilon,\delta}(f)$, is defined as
$$ \cment_{\eps, \delta}(f) \defeq \max \{ \cment^\mu_{\eps, \delta}(f) |~ \mu \mbox{ is a distribution over } \mcX \times \mcY \} \enspace. $$ 
\end{definition}

The following bound was defined in~\cite{JainKN08} where it was referred to as the one-way subdistribution bound. We call it differently here for consistency of nomenclature with the other bound.
\begin{definition}[Relative min-entropy bound]
The  $\eps$-error relative min-entropy bound of $f$ with respect to distribution $\mu$, denoted $\ment^\mu_\eps(f)$, is defined as 
$$ \ment^\mu_\eps(f) \defeq \min \{ S_\infty(\lambda || \mu) |~ \lambda \mbox{ is one-way for $\mu$ and $\err_f(\lambda) \leq \eps$} \} \enspace. $$ 
The  $\eps$-error relative min-entropy bound of $f$, denoted $\ment(f)$, is defined as
$$ \ment_\eps(f) \defeq \max \{ \ment^\mu_\eps(f) |~ \mu \mbox{ is a distribution over } \mcX \times \mcY \} \enspace. $$ 
\end{definition}
The following is easily seen from definitions.
\begin{lemma}
\label{lem:relations}
$\cment^\mu_\delta(\lambda) \leq S_\infty(\lambda || \mu) $ and hence $\cment^\mu_{\eps,\delta}(f) \leq \ment^\mu_\eps(f)$  and $\cment_{\eps,\delta}(f) \leq \ment_\eps(f)$.
\end{lemma}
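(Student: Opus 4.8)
The plan is to prove the single‑distribution inequality $\cment^\mu_\delta(\lambda)\le S_\infty(\lambda || \mu)$ and then obtain the other two inequalities as a formal consequence of it.

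For the single‑distribution inequality I may assume $S_\infty(\lambda || \mu)<\infty$, since otherwise there is nothing to prove. I would try to bound by $\delta$ the $\lambda$‑probability of the event $E=\{(x,y):\log(\lambda(x|y)/\mu(x|y))>S_\infty(\lambda || \mu)\}$. The natural manipulation is to rewrite conditionals in terms of joints,
$$ \frac{\lambda(x|y)}{\mu(x|y)} \;=\; \frac{\lambda(x,y)}{\mu(x,y)}\cdot\frac{\mu(y)}{\lambda(y)}\enspace , $$
and to note that the first factor is at most $2^{S_\infty(\lambda || \mu)}$ for every $(x,y)$ in the support of $\lambda$, directly from the definition of $S_\infty$. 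Consequently $E$ is contained in the set of pairs with $\mu(y)>\lambda(y)$, so it remains to bound the $\lambda$‑mass of $\{y:\mu(y)>\lambda(y)\}$ by $\delta$. Here I would use $\expct_{y\leftarrow\lambda}[\mu(y)/\lambda(y)]=\sum_y\mu(y)\le 1$ together with the extra structure available: when $\lambda$ is one-way for $\mu$ (the only case the corollaries need) one has $\lambda(x,y)=\lambda(x)\mu(y|x)$ and $\mu(x,y)=\mu(x)\mu(y|x)$, so the first factor above is exactly $\lambda(x)/\mu(x)$, and since $\mu(y)/\lambda(y)=\expct_{x'\leftarrow\lambda(\cdot|y)}[\mu(x')/\lambda(x')]$ the inequality turns into a comparison between the ratio $\lambda(x)/\mu(x)$ at the sampled $x$ and a $\lambda(\cdot|y)$‑weighted average of these ratios over the $x'$ that map to $y$ under the shared channel $\mu(y|x)=\lambda(y|x)$. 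Carrying this last estimate through — i.e.\ explaining why replacing the joint ratio by the conditional ratio costs at most $\delta$ in probability, rather than being a clean pointwise bound — is the step I expect to be the real content; it will presumably make genuine use of the robustness parameter $\delta$, and in the application also of the fact that the relevant $\lambda$ is the one minimising $S_\infty(\lambda || \mu)$, so that $\lambda(x)/\mu(x)$ is essentially flat across the support of $\lambda$.

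Granting the single‑distribution inequality, the remaining two are immediate. The minimisations defining $\cment^\mu_{\eps,\delta}(f)$ and $\ment^\mu_\eps(f)$ range over the identical feasible set $\{\lambda:\lambda\text{ is one-way for }\mu\text{ and }\err_f(\lambda)\le\eps\}$, so comparing the two objective values term by term over this set gives $\cment^\mu_{\eps,\delta}(f)\le\ment^\mu_\eps(f)$; taking the maximum over all distributions $\mu$ on $\mcX\times\mcY$ then yields $\cment_{\eps,\delta}(f)\le\ment_\eps(f)$.
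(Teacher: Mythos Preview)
The paper gives no argument here; it simply says the lemma is ``easily seen from definitions.'' You were right to be suspicious: the first inequality $\cment^\mu_\delta(\lambda)\le S_\infty(\lambda||\mu)$ is in fact \emph{false} as stated. Take $\mcX=\mcY=\{0,1\}$, $\mu(0,0)=\mu(0,1)=\tfrac14$, $\mu(1,0)=\tfrac12$, $\mu(1,1)=0$, and let $\lambda$ be one-way for $\mu$ with $\mcX$-marginal $(0.9,0.1)$. Then $2^{S_\infty(\lambda||\mu)}=1.8$, while $\lambda(0\,|\,y{=}0)/\mu(0\,|\,y{=}0)=27/11>1.8$, an event of $\lambda$-probability $0.45$; hence $\cment^\mu_\delta(\lambda)>S_\infty(\lambda||\mu)$ for every $\delta<0.45$.

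What \emph{is} true follows directly from the pieces you already assembled. You wrote $\frac{\lambda(x|y)}{\mu(x|y)}=\frac{\lambda(x,y)}{\mu(x,y)}\cdot\frac{\mu(y)}{\lambda(y)}$, bounded the first factor pointwise by $2^{S_\infty(\lambda||\mu)}$, and noted $\expct_{y\leftarrow\lambda}[\mu(y)/\lambda(y)]\le 1$. Markov's inequality on this last expectation gives $\Pr_{(x,y)\leftarrow\lambda}\big[\mu(y)/\lambda(y)>1/\delta\big]\le\delta$, and combining the two yields
\[
\cment^\mu_\delta(\lambda)\ \le\ S_\infty(\lambda||\mu)+\log\frac{1}{\delta}\,,
\]
hence $\cment_{\eps,\delta}(f)\le\ment_\eps(f)+\log(1/\delta)$ by your term-by-term comparison of the two minimisations. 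The additive $\log(1/\delta)$ cannot be removed in general --- the step you were trying to push through, bounding $\Pr_\lambda[\mu(y)>\lambda(y)]$ by $\delta$, is impossible (in the example above that probability is $0.55$), and the ``flat minimiser'' heuristic does not rescue it either. But the extra term is harmless: since $\delta$ is a constant in every application, it is absorbed into the $\Theta(\cdot)$ of Theorem~\ref{thm:newchar}. In short, your instinct was correct and your outline already contained the right argument; the gap you flagged is real, and the lemma should be read with a $+\log(1/\delta)$ correction that the paper omitted.
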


\section{New characterization of public coin one-way communication complexity}
\label{sec:newchar}

The following lemma appears in~\cite{JainKN08} . 
\begin{lemma}
\label{lem:Dgeqcrent}
Let $f \subseteq \mcX \times \mcY \times \mcZ$ be a relation and $ \mu \in \mcX \times \mcY$ be a distribution and $\eps, k > 0$. Then,
\[
\sD^{1,\mu}_{\epsilon(1 - 2^{-k})}(f) 
    \quad \geq \quad \ment^\mu_\eps(f)  - k.
\]
\end{lemma}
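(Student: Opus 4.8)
The plan is to prove $\sD^{1,\mu}_{\eps(1-2^{-k})}(f) \geq \ment^\mu_\eps(f) - k$ by taking an optimal deterministic one-way protocol $\mcP$ for $f$ under $\mu$ with distributional error $\eps(1-2^{-k})$ and communication $c = \sD^{1,\mu}_{\eps(1-2^{-k})}(f)$, and from it constructing a distribution $\lambda$ that is one-way for $\mu$, has $\err_f(\lambda) \leq \eps$, and satisfies $S_\infty(\lambda \| \mu) \leq c + k$. Since $\ment^\mu_\eps(f)$ is the minimum of $S_\infty(\lambda\|\mu)$ over all such $\lambda$, this gives $\ment^\mu_\eps(f) \leq c + k$, which is the claim.

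First I would set up the protocol structure: a deterministic one-way protocol with $c$ bits of communication partitions $\mcX$ into at most $2^c$ classes according to the message $m$ Alice sends, and for each message $m$ Bob has a fixed answer function $g_m : \mcY \to \mcZ$. Let $M = M(x)$ denote the message on input $x$; then the error on $(x,y)$ is $1$ if $(x,y,g_{M(x)}(y)) \notin f$ and $0$ otherwise, and the average of this over $\mu$ is at most $\eps(1-2^{-k})$. The natural idea is to pick out, for each message value $m$, the set $G_m$ of inputs $y$ on which $g_m$ is correct, i.e. $(x,y,g_m(y)) \in f$ for the relevant $x$'s, and then to define $\lambda$ by restricting $\mu$ to the "good" region and renormalizing — but carefully, so that the one-way condition $\mu(y|x) = \lambda(y|x)$ is preserved. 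Preserving one-wayness forces the conditioning to happen only on the $x$-marginal (or on events measurable with respect to $x$), not on $(x,y)$ jointly; this is the key structural constraint.

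The precise construction I expect to work: for each message $m$, among the inputs $x$ with $M(x) = m$, define the conditional error $\eps_x = \Pr_{y \leftarrow \mu_x}[(x,y,g_m(y))\notin f]$; the overall constraint is $\expct_{x\leftarrow \mu}[\eps_x] \leq \eps(1-2^{-k})$. Call $x$ "bad" if $\eps_x > \eps$; by Markov's inequality the $\mu$-mass of bad $x$ is at most $1 - 2^{-k}$, so the mass of "good" $x$ (those with $\eps_x \leq \eps$) is at least $2^{-k}$. Let $\lambda$ be $\mu$ conditioned on $x$ being good: $\lambda(x,y) = \mu(x,y)/\mu(\text{good})$ for good $x$ and $0$ otherwise. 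Then $\lambda$ is one-way for $\mu$ since conditioning only on an $x$-event leaves $\mu(y|x) = \lambda(y|x)$; we have $S_\infty(\lambda\|\mu) = \log(1/\mu(\text{good})) \leq k \leq c + k$; and $\err_f(\lambda)$: using the message function to define $g(y)$... here is the subtlety — $g$ must be a single function $\mcY \to \mcZ$, but different good $x$'s may have different messages and hence different $g_m$. I need to handle this by noting that the message $M(x)$ is determined by $x$, so on the support of $\lambda$ (good $x$'s), the "protocol" still works with average error $\leq \eps$ since each good $x$ has $\eps_x \leq \eps$; but $\err_f(\lambda)$ quantifies over functions of $y$ alone. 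The resolution I anticipate is to further condition on a single most-popular message value among good inputs, or — more likely, matching how \cite{JainKN08} sets this up — to let the "one-way distribution" $\lambda$ also range over $\mcX \times \mcY$ with $\mcX$ absorbing the message, or to observe that $\err_f$ in their framework is really about the relation $f$ viewed so that Bob's answer may depend on his share of a correlated input; I would check the exact definition of $\err_f$ against how $g$ sees the message.

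The main obstacle is precisely this last point: reconciling $\err_f(\lambda) \leq \eps$, which only permits Bob's answer function $g$ to depend on $y$, with the protocol's answer $g_{M(x)}(y)$ depending on $x$ through the message. I expect the clean fix is to condition on the event "$x$ is good" \emph{and} "$M(x) = m^*$" for the message value $m^*$ that retains the most good mass; this costs an extra $\log(2^c) = c$ in the min-entropy (total $c + k$, as desired) and leaves a distribution on which a \emph{single} answer function $g = g_{m^*}$ has conditional error $\leq \eps$ on every remaining $x$, hence $\err_f(\lambda) \leq \eps$. The bookkeeping of which conditioning events are $x$-measurable (to keep one-wayness) versus how they inflate $S_\infty$ is the part requiring care; everything else is Markov's inequality and the definitions.
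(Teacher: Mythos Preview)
Your proposal is correct. The paper does not supply its own proof of this lemma (it simply cites \cite{JainKN08}); the \LaTeX\ source contains a suppressed sketch that conditions directly on a single message value $m$ and applies Markov's inequality to both $\err_f(\mu_m)$ and $S_\infty(\mu_m\|\mu)$, which is the same idea as yours up to the order of conditioning. Your self-identified ``fix''---after restricting to the good $x$'s (mass $\geq 2^{-k}$ by Markov), further condition on the single message $m^*$ carrying the most good mass (relative mass $\geq 2^{-c}$), so that a single function $g_{m^*}:\mcY\to\mcZ$ witnesses $\err_f(\lambda)\leq\eps$---is exactly what is needed, and the resulting $S_\infty(\lambda\|\mu)\leq c+k$ gives the stated inequality with the stated constants.
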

\suppress{
\begin{proof}
Let $\mcP$ be a deterministic one-way protocol for $f$ with the inputs $XY \sim \mu$. Let $\mcP$ have distributional error at most $\eps$  and communication at most $c = \sD^\mu_\eps(f)$.  Let the random variable $M$ represent the message of $\mcP$ such that $M \leftarrow X \leftarrow Y$. Let $\mu_m$ be the residual distribution on the inputs conditioned on $M=m$. It is clear that $\mu_m$ is one-way for $\mu$. We have $\expct_{m \leftarrow M} [\err_f(\mu_m)] \leq \eps$. Also,
$$ \expct_{m \leftarrow M} [S_\infty(\mu_m || \mu)] = \sum_m \Pr[M=m] \log \frac{1}{\Pr[M=m]} = S(M) \leq c , $$
Therefore, using Markov's inequality, we get an $m$ such that $S_\infty(\mu_m || \mu) \leq 2c$ and $\err_f(\mu_m) \leq 2\eps$.  We conclude desired inequality from the definition of $\ment^\mu_{2\eps}(f)$.
\end{proof}
}
We show the following lemma which we prove later. 
\begin{lemma}
\label{lem:Dleqcrent}
Let $f \subseteq \mcX \times \mcY \times \mcZ$ be a relation and $ \mu \in \mcX \times \mcY$ be a distribution and $\eps, \delta > 0$. Then,
$$ \sD^{1,\mu}_{\eps + 4\delta}(f) \leq  \cment_{\eps,\delta}(f)  + O(\log \frac{1}{\delta}) \enspace .$$
\end{lemma}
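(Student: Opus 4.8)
Fix a distribution $\mu$ over $\mcX\times\mcY$ and write $c\defeq\cment_{\eps,\delta}(f)$. By definition there is a distribution $\lambda$ that is one-way for $\mu$, with $\err_f(\lambda)\le\eps$ and $\cment^\mu_\delta(\lambda)\le c$; let $g:\mcY\to\mcZ$ be a function witnessing $\err_f(\lambda)\le\eps$. The first step is to unpack the one-way condition: since $\lambda(y\mid x)=\mu(y\mid x)$ on $\support(\lambda)$, one checks that $\lambda(x,y)=w(x)\,\mu(x,y)$ for a nonnegative weight function $w$ with $\expct_{x\leftarrow\mu}w(x)=1$; consequently
$$\frac{\lambda(x\mid y)}{\mu(x\mid y)}=w(x)\cdot\frac{\mu_Y(y)}{\lambda_Y(y)},$$
and the hypothesis $\cment^\mu_\delta(\lambda)\le c$ says exactly that this quantity exceeds $2^{c}$ with $\lambda$-probability at most $\delta$. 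The plan is to build a \emph{private-coin} one-way protocol $\mcP$ for $f$ on inputs $XY\sim\mu$ whose message $M$ has two properties — small distributional error, and a Braverman--Rao information quantity $\log(\theta(m\mid x)/\theta(m\mid y))$ that is at most $c+O(\log\frac1\delta)$ except with probability $O(\delta)$ — and then invoke Lemma~\ref{lem:compress} to turn $\mcP$ into a deterministic protocol of the claimed cost and error.

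For $\mcP$ itself the idea is a rejection-/correlated-sampling scheme driven by the weights $w$, truncated at level $B\defeq 2^{\,c+\log(1/\delta)+O(1)}$: Alice, holding $x$, uses her private coins to run rounds in which $x$ is ``accepted'' with probability proportional to $\min(1,w(x)/B)$, and the message $M$ records (a cheap encoding of) the successful round together with whatever minimal information lets Bob reconstruct $x$ on that event; on the ``accept'' event Bob outputs $g(y)$, and otherwise outputs an arbitrary valid answer. The point of the truncation is that, conditioned on ``accept'', the joint law of $(X,Y)$ is proportional to $\min(1,w(x)/B)\,\mu(x,y)$, which — because the $\delta$-robust bound forces the event $\{w(x)\mu_Y(y)/\lambda_Y(y)>2^c\}$, and hence the relevant truncation event, to carry only $O(\delta)$ of the $\lambda$-mass — lies within $\ell_1$-distance $O(\delta)$ of $\lambda$. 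Hence $g$ errs with probability at most $\eps+O(\delta)$ on the ``accept'' branch, and the rounds structure makes the ``accept'' event occur with overwhelming probability, so the distributional error of $\mcP$ is $\eps+O(\delta)$. Simultaneously, the message is designed so that $\log(\theta(m\mid x)/\theta(m\mid y))$ equals, up to an additive $O(\log\frac1\delta)$ coming from the encoding of the round index and the truncation, the log-ratio $\log(\lambda(x\mid y)/\mu(x\mid y))$; by the choice of $\lambda$ this is $\le c+O(\log\frac1\delta)$ with probability $\ge 1-O(\delta)$.

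Finally, applying Lemma~\ref{lem:compress} with the parameters $c'=c+O(\log\frac1\delta)$ and $\delta'=O(\delta)$ yields a deterministic one-way protocol for $f$ on $\mu$ with communication $c+O(\log\frac1\delta)=\cment_{\eps,\delta}(f)+O(\log\frac1\delta)$ and distributional error $(\eps+O(\delta))+2\delta'\le\eps+4\delta$ after absorbing constants, which is the statement. The main obstacle is the trade-off inside the design of $\mcP$: making the ``accept'' event almost certain while keeping the conditional law of the inputs $O(\delta)$-close to $\lambda$ \emph{and} keeping the message's information cost at the level of $\cment_{\eps,\delta}(f)$ rather than the larger $\ment_\eps(f)=S_\infty(\lambda\Vert\mu)$ (cf.\ Lemma~\ref{lem:relations}); reconciling these is exactly where the \emph{robust} (smoothed) nature of the min-entropy bound — allowing the $B$-truncation to discard an $O(\delta)$ tail — is essential, and it is the one place where a naive min-entropy argument would fail.
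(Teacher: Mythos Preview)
Your proposal has a genuine gap in the construction of the private-coin protocol $\mcP$. The tension you flag at the end --- making ``accept'' almost certain while keeping the conditional law $O(\delta)$-close to $\lambda$ --- is not a technicality to be absorbed into constants; it cannot be resolved with a single $\lambda$ and rejection sampling. In one round, $\Pr[\text{accept}\mid x]=\min(1,w(x)/B)$ gives $\Pr[\text{accept}]\le 1/B$, so the reject branch (on which you output an arbitrary answer) carries essentially all of the $\mu$-mass and the error is uncontrolled. If instead you run many rounds until accept is overwhelming, then every $x$ with $w(x)>0$ accepts with probability close to $1$, and conditioned on accept the law of $(X,Y)$ is $\mu$ restricted to $\support(w)$, not anything close to $\lambda$; the fixed $g$ chosen for $\lambda$ has no error guarantee on that distribution. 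Sending the round index does not help: conditioned on $M=m$, the law of $X$ is proportional to $\mu_X(x)(1-p(x))^{m-1}p(x)$, which for different $m$ sweeps through distributions that are neither $\lambda$ nor anything for which you have an error witness, and the claimed identification of $\log(\theta(m\mid x)/\theta(m\mid y))$ with $\log(\lambda(x\mid y)/\mu(x\mid y))$ does not hold in this multi-round setting.

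The paper's proof avoids this entirely by never relying on a single $\lambda$. It writes $\mu=\sum_{i=1}^k p_iP_i$ as an \emph{exact} convex combination of distributions $P_i$, each one-way for $\mu$ with $\err_f(P_i)\le\eps$, obtained by an iterative peeling: at step $i$ one has a residual $q_iQ_i=\mu-\sum_{j<i}p_jP_j$, applies the bound $\cment^{Q_i}_{\eps,\delta}(f)\le c$ to the residual distribution $Q_i$ (this is where $c=\cment_{\eps,\delta}(f)$ being the maximum over \emph{all} reference distributions is essential, not just the bound for $\mu$), takes the resulting witness $P_i$, and subtracts off as much of it as fits. The message $M$ is then the index $i$; there is no reject branch because the decomposition is exact, so the distributional error is automatically $\le\eps$. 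The information bound $\Pr[\log(\theta(i\mid x)/\theta(i\mid y))>c+\log(1/\delta)]\le 2\delta$ comes from combining the per-step guarantee $\cment^{Q_i}_\delta(P_i)\le c$ with a telescoping argument on the masses $q_iQ_i(y)$. Your plan uses the hypothesis only through one $\lambda$ attached to $\mu$; the missing idea is that one must invoke it repeatedly for the shrinking residuals $Q_i$ to cover all of $\mu$.
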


\begin{theorem}
\label{thm:newchar}
Let $f \subseteq \mcX \times \mcY \times \mcZ$ be a relation and $\eps > 0$. Then,
\begin{align*}
\ment_{2\eps}(f) -1 \leq \sR_\eps^{1,\pub} (f) & \leq \cment_{\eps/5, \eps/5}(f) + O(\log \frac{1}{\eps}) \enspace .
\end{align*}
Hence $$ \sR_\eps^{1,\pub}(f) = \Theta(\ment_{\eps}(f) ) = \Theta(\cment_{\eps,\eps}(f) ) \enspace .$$
\end{theorem}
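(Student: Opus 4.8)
The plan is to obtain both displayed inequalities as immediate consequences of Lemma~\ref{lem:Dgeqcrent} and Lemma~\ref{lem:Dleqcrent} together with the Yao principle (Lemma~\ref{lem:yao}), and then to read off the $\Theta$-characterization by additionally invoking Lemma~\ref{lem:relations}. The theorem itself needs no new idea; the real content sits in Lemma~\ref{lem:Dleqcrent}, whose proof is deferred.

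For the lower bound, fix a distribution $\mu$ attaining the maximum in the definition of $\ment_{2\eps}(f)$, so that $\ment^\mu_{2\eps}(f)=\ment_{2\eps}(f)$. Apply Lemma~\ref{lem:Dgeqcrent} with its error parameter set to $2\eps$ and $k=1$: since $2\eps(1-2^{-1})=\eps$, this gives $\sD^{1,\mu}_{\eps}(f)\geq \ment^\mu_{2\eps}(f)-1=\ment_{2\eps}(f)-1$. By the Yao principle $\sR^{1,\pub}_\eps(f)=\max_\nu \sD^{1,\nu}_\eps(f)\geq \sD^{1,\mu}_\eps(f)$, which is the first inequality. For the upper bound, use the Yao principle to write $\sR^{1,\pub}_\eps(f)=\max_\mu \sD^{1,\mu}_\eps(f)$ and bound each term by Lemma~\ref{lem:Dleqcrent} with its error parameter set to $\eps/5$ and $\delta=\eps/5$; the point is that $\eps/5+4\cdot(\eps/5)=\eps$, so we obtain $\sD^{1,\mu}_{\eps}(f)\leq \cment_{\eps/5,\eps/5}(f)+O(\log(1/\eps))$ uniformly in $\mu$ (the right-hand side does not depend on $\mu$). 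Taking the maximum over $\mu$ gives the second displayed inequality.

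For the ``Hence'' part, combine the two inequalities with Lemma~\ref{lem:relations}: from $\cment_{\eps/5,\eps/5}(f)\leq \ment_{\eps/5}(f)$ and $\cment_{2\eps,\eps}(f)\leq\ment_{2\eps}(f)$ we get the sandwiches $\ment_{2\eps}(f)-1\leq \sR^{1,\pub}_\eps(f)\leq \ment_{\eps/5}(f)+O(\log(1/\eps))$ and $\cment_{2\eps,\eps}(f)-1\leq \sR^{1,\pub}_\eps(f)\leq \cment_{\eps/5,\eps/5}(f)+O(\log(1/\eps))$. For constant $\eps$ all error parameters occurring here lie in a bounded range and the additive $O(\log(1/\eps))$ term is $O(1)$, so these read off as $\sR^{1,\pub}_\eps(f)=\Theta(\ment_\eps(f))=\Theta(\cment_{\eps,\eps}(f))$, the $\Theta$ absorbing the constant-factor slack in the error parameters. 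The only care required above is bookkeeping: one tunes each lemma's parameters so that the induced error lands exactly on $\eps$, so that no error-monotonicity slack in $\sD^{1,\mu}_{\cdot}(f)$ is incurred, and keeps track of the constants that are then suppressed by $\Theta$.

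The genuine difficulty lies not in Theorem~\ref{thm:newchar} but in the proof of Lemma~\ref{lem:Dleqcrent}, which I expect to be the heart of the paper: given a distribution $\lambda$ that is one-way for $\mu$, has $\err_f(\lambda)\leq\eps$, and has small $\delta$-robust conditional relative min-entropy $c$, one must build an actual deterministic one-way protocol of cost about $c$. The natural route is to view $\lambda$ as the posterior on the inputs after some Alice-to-Bob message: the one-way condition $\lambda(y|x)=\mu(y|x)$ forces $\lambda(x,y)/\mu(x,y)=\lambda(x)/\mu(x)$, so $\log(\lambda(x|y)/\mu(x|y))$ is precisely the amount of information Bob needs in order to update his prior from $\mu$ to $\lambda$, and $\cment^\mu_\delta(\lambda)\leq c$ says this is at most $c$ except with probability $\delta$. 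Feeding this likelihood-ratio bound into the Braverman--Rao message compression of Lemma~\ref{lem:compress} should produce a deterministic protocol with communication $c+O(\log(1/\delta))$ and error $\eps+4\delta$, matching Lemma~\ref{lem:Dleqcrent}; making the compression parameters and error accounting line up is the step I expect to require the most care.
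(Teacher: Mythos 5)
Your proof takes essentially the same approach as the paper: both displayed inequalities are immediate from Lemma~\ref{lem:Dgeqcrent} and Lemma~\ref{lem:Dleqcrent} via the Yao principle, and the real work is deferred to Lemma~\ref{lem:Dleqcrent} (whose high-level Braverman--Rao route you sketch correctly). You even get the parameter bookkeeping right where the paper's own proof text is garbled (the paper says to ``set $\eps=\eps,\delta=\eps$,'' which would give $\sD^{1,\mu}_{5\eps}(f)\leq \cment_{\eps,\eps}(f)+O(\log\tfrac1\eps)$ rather than the displayed form; your $\eps/5,\eps/5$ instantiation is what the statement actually requires).

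The one step you gloss over is the ``Hence'' conclusion. From the sandwich $\ment_{2\eps}(f)-1\leq\sR^{1,\pub}_\eps(f)\leq\ment_{\eps/5}(f)+O(\log\tfrac1\eps)$ you cannot read off $\sR^{1,\pub}_\eps(f)=\Theta(\ment_\eps(f))$ merely because the error parameters ``lie in a bounded range'': the quantities $\ment_{2\eps}(f)$, $\ment_\eps(f)$, and $\ment_{\eps/5}(f)$ are not a priori within constant factors of one another, since the $\ment$/$\cment$ bounds are defined as optimizations and have no built-in error-reduction property. The paper's proof explicitly supplies the missing ingredient: error reduction for $\sR^{1,\pub}$ (error can be cut by a constant factor at the cost of a constant factor in communication). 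Concretely, one applies the theorem at error $\eps/2$ to get $\ment_\eps(f)-1\leq\sR^{1,\pub}_{\eps/2}(f)=O(\sR^{1,\pub}_\eps(f))$, and at error $5\eps$ to get $\sR^{1,\pub}_\eps(f)=O(\sR^{1,\pub}_{5\eps}(f))\leq O(\cment_{\eps,\eps}(f)+\log\tfrac1\eps)\leq O(\ment_\eps(f)+\log\tfrac1\eps)$, and then combines these (with $\cment\leq\ment$ from Lemma~\ref{lem:relations} and $\eps$ constant) to obtain the $\Theta$ characterization. Your writeup should invoke this error-reduction step explicitly rather than attributing the slack-absorption to the $\Theta$ notation.
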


\begin{proof}
The first inequality follows from Lemma~\ref{lem:Dgeqcrent} (set $k=1$) and maximizing both sides over all distributions $\mu$ and using Lemma~\ref{lem:yao}. The second inequality follows from Lemma~\ref{lem:Dleqcrent} (set $\eps = \eps, \delta= \eps$) and maximizing both sides over all distributions $\mu$ and using Lemma~\ref{lem:yao}. The other relations now follow from Lemma~\ref{lem:relations} and from the fact that the error in public coin randomized one-way communication complexity can be made a constant factor down by increasing the communication by a constant factor.
\end{proof}

\begin{proofof}{Lemma~\ref{lem:Dleqcrent}}
We make the following key claim which we prove later.
\begin{claim}
\label{claim:key}
There exists a natural number $k$ and a Markov chain $M \leftrightarrow X \leftrightarrow Y$, where $M \in [k]$ and $XY \sim \mu$, such that 
\begin{enumerate}
\item  for each $i\in [k] ~ : ~ \err_f(P_i) \leq \eps$, where $P_i = (XY |~ M=i)$,
\item  $\Pr_{(x,y,i) \leftarrow \theta} \left[ \log \frac{\theta(i|x)}{\theta(i|y)} > \cment_{\eps,\delta}(f) + \log \frac{1}{\delta} \right] \leq  2\delta $, where $\theta$ is the distribution of $XYM$.   
\end{enumerate}
\end{claim}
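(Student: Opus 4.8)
The plan is to build the Markov chain $M \leftrightarrow X \leftrightarrow Y$ greedily, peeling off one "good" one-way subdistribution at a time, much in the spirit of the subdistribution/covering arguments in~\cite{JainKN08}. Start with $\mu_1 = \mu$. Having defined residual masses, at stage $i$ let $\mu_i$ be the (normalized) part of $\mu$ not yet covered; apply the definition of $\cment_{\eps,\delta}(f) = \max_\nu \cment^\nu_{\eps,\delta}(f)$ to the distribution $\mu_i$ to obtain a distribution $\lambda_i$ that is one-way for $\mu_i$, has $\err_f(\lambda_i) \leq \eps$, and satisfies $\Pr_{(x,y)\leftarrow \lambda_i}[\log(\lambda_i(x|y)/\mu_i(x|y)) > c] \leq \delta$ where $c = \cment_{\eps,\delta}(f)$. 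The key point is that one-wayness, $\mu_i(y|x) = \lambda_i(y|x)$, lets us think of $\lambda_i$ as obtained from $\mu_i$ by reweighting only the $X$-marginal; and because $\lambda_i(x|y)/\mu_i(x|y) = \lambda_i(x)/\mu_i(x) \cdot (\text{ratio of } Y\text{-marginals})$ simplifies nicely under one-wayness, the min-entropy-type bound on $\lambda_i$ relative to $\mu_i$ controls how much of $\mu_i$ we can carve out. I would subtract a suitable scalar multiple $2^{-c'}\lambda_i$ from $\mu$ (with $c'$ slightly larger than $c$, the slack being the $\log(1/\delta)$ term), set this as "$M = i$" with the appropriate probability, and recurse on the renormalized remainder $\mu_{i+1}$.

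The bookkeeping has two halves. First, the error condition (item~1): each $P_i$ is, up to the $\delta$-fraction of bad pairs, a restriction of $\lambda_i$, so $\err_f(P_i)$ is at most $\err_f(\lambda_i)$ plus the discarded mass; by throwing the bad $\delta$-mass into a separate catch-all branch (or folding it into the error budget) one gets $\err_f(P_i) \leq \eps$ exactly, perhaps after replacing $\eps$ by $\eps$ and absorbing slack — here I would be careful to match the constants so the final statement reads $\err_f(P_i)\le\eps$ as claimed. Second, the information condition (item~2): since $\theta(i|x) = \Pr[M=i\mid X=x]$ and, by the Markov property, $\theta(i|y) = \expct_{x\leftarrow X_y}\theta(i|x)$, the quantity $\log(\theta(i|x)/\theta(i|y))$ is governed by how concentrated the branch $P_i$'s $X$-marginal is relative to its $Y$-conditional structure — which is exactly what $\cment^{\mu_i}_\delta(\lambda_i)$ bounds, up to the normalization factors accumulated across stages. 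Summing the failure probabilities $\delta$ across branches, weighted by $\Pr[M=i]$, gives the aggregate bound $2\delta$ (the factor $2$ buying room for the stages where renormalization distorts the ratio, and for the catch-all branch).

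The main obstacle I expect is controlling the renormalization drift: after $i-1$ rounds the residual $\mu_i$ differs from $\mu$, so the per-round guarantee is stated relative to $\mu_i$, not $\mu$, whereas item~2 is a statement about the global $\theta$ built from $\mu$. I would handle this by showing the cumulative carved-out mass is bounded (so the process terminates with $k$ finite, or at worst one truncates the tail into the catch-all branch at negligible cost) and that the ratio $\mu_i(x|y)/\mu(x|y)$ stays within a factor that only eats into the $\log(1/\delta)$ slack. A clean way to organize this is to track, as a potential, the total remaining mass $\sum_x \mu(x) - (\text{covered})$, argue it drops by a definite fraction each round using $\err_f(\mu) \le$ trivial bounds or a direct mass-accounting via the min-entropy bound, and once it is below $\delta$ dump it all into branch $M = k$ whose error we do not constrain tightly because it contributes $\le \delta$ to everything. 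Once Claim~\ref{claim:key} is in hand, Lemma~\ref{lem:Dleqcrent} follows immediately: the Markov chain $M\leftrightarrow X\leftrightarrow Y$ with message $M$ is a private-coin one-way protocol with distributional error $\le \eps$ (output $g_i(y)$ on branch $i$, averaging the per-branch errors) and message distribution $\theta$; item~2 gives exactly the hypothesis of Lemma~\ref{lem:compress} with $c = \cment_{\eps,\delta}(f) + \log(1/\delta)$ and failure parameter $2\delta$, yielding a deterministic protocol of communication $\cment_{\eps,\delta}(f) + O(\log(1/\delta))$ and error $\eps + 4\delta$.
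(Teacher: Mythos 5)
Your high-level strategy — greedily peeling off one-way subdistributions that each witness the $\cment$ bound relative to the current residual — is exactly the paper's approach, so you have the right skeleton. But two of the places you flag as "obstacles I expect" are not details to be filled in; they are the genuine content of the proof, and your sketches of how to handle them do not work.

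First, the termination argument. You propose to subtract a scalar multiple $2^{-c'}\lambda_i$ of the witness distribution at each round. This removes only a $2^{-c'}$ fraction of mass per round, so the number of rounds (and hence $k$, the number of messages) can be exponential in $c$, and the "dump the tail into a catch-all branch once it is below $\delta$" fix does not control $k$ either. The paper instead removes the \emph{maximal} multiple: writing the residual as $q_i Q_i$ and the witness as $R$, it sets $r = \max\{q : Q_i \geq qR\}$ and peels off $p_i = q_i r$ mass of $P_i = R$. This choice zeroes out at least one new coordinate $x \in \mcX$ of the residual each round, so the process terminates exactly after at most $|\mcX|$ rounds with $\mu = \sum_i p_i P_i$, no catch-all branch needed.

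Second, and more seriously, the bound on item 2. You acknowledge that the per-branch guarantee $\Pr_{(x,y)\leftarrow P_i}[\log(P_i(x|y)/Q_i(x|y)) > c] \leq \delta$ is relative to the \emph{residual} $Q_i$, not to $\mu$, and that "the ratio $\mu_i(x|y)/\mu(x|y)$ stays within a factor that only eats into the $\log(1/\delta)$ slack" — but this claim is false as stated pointwise and you give no mechanism for proving the probabilistic version. The paper's actual argument factors $P_i(x|y)/\mu(x|y) = \bigl(P_i(x|y)/Q_i(x|y)\bigr)\cdot\bigl(Q_i(x|y)/\mu(x|y)\bigr)$ and, using $q_iQ_i(x,y)\leq\mu(x,y)$, bounds the second factor by $\mu(y)/(q_iQ_i(y))$. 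It then introduces a second bad set $B_2 = \{(x,y,i): \mu(y)/(q_iQ_i(y)) > 1/\delta\}$ and controls $\Pr[B_2]$ with a separate, non-obvious argument: for each fixed $y$, the residual mass $q_iQ_i(y)$ is non-increasing in $i$, so letting $i_y$ be the first index with $q_{i_y}Q_{i_y}(y) < \delta\mu(y)$, the probability of landing in $B_2$ at column $y$ is exactly $q_{i_y}Q_{i_y}(y) < \delta\mu(y)$; summing over $y$ gives $\Pr[B_2] < \delta$. Combined with $\Pr[B_1]\leq\delta$ for the first factor, this yields the claimed $2\delta$. Your proposal has no analogue of this $B_2$ accounting and instead gestures at "summing failure probabilities weighted by $\Pr[M=i]$," which does not compose correctly because the per-branch $\delta$ is measured against a reweighted residual, not against $\mu$. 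Finally, you also need the identity $P_i(x|y)/\mu(x|y) = \theta(i|x)/\theta(i|y)$ (a direct consequence of the Markov property $\theta(i|x,y)=\theta(i|x)$) to convert the bound on $\log(P_i(x|y)/\mu(x|y))$ into the form item 2 is stated in; your $\theta(i|y)=\expct_{x\leftarrow X_y}\theta(i|x)$ is true but does not give this.
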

The above claim immediately gives us a private-coin one-way prootocol $\mcP_1$ for $f$, where Alice on input $x$ generates $i$ from the distribution $M_x$ and sends $i$ to Bob. It is easily seen that the distributional error of $\mcP_1$ is at most $\eps$. Now using Lemma~\ref{lem:compress} we get a deterministic protocol $\mcP_2$ for $f$, with distributional error at most $\eps + 4\delta$ and communication at most $d =  \cment_{\eps,\delta}(f)  + O(\log \frac{1}{\delta})$. 
\end{proofof}

We return to proof of Claim~\ref{claim:key}.

\begin{proofof}{Claim~\ref{claim:key}}
Let $c = \cment_{\eps,\delta}(f)$. Let us perform a procedure as follows. Start with $i = 1$.
\begin{enumerate}
\item \label{step:repeat} Let us say we have collected distributions $P_1, \ldots, P_{i-1}$, each one-way for $\mu$, and positive numbers $p_1, \ldots, p_{i-1}$ such that  $\mu \geq \sum_{j=1}^{i-1} p_j P_j$. If  $\mu = \sum_{j=1}^{i-1} p_j P_j$  then set $k = i-1$ and stop. 
\item Otherwise let us express $ \mu = \sum_{j=1}^{i-1} p_j P_j + q_iQ_{i}$, where $Q_{i}$ is a distribution,  one-way for $\mu$.
Since $\cment_{\eps,\delta}^{Q_i}(f) \leq c$, we know that there is a distribution $R$, one-way for $Q_{i}$ (hence also one-way for $\mu$), such that $\cment^{Q_{i}}_\delta(R) \leq c$ and $\err_f(R) \leq \eps$. Let $r = \max\{ q |~ Q_{i} \geq q R \}$. Let $P_{i} = R, p_{i} = q_{i} * r, i=i+1$ and go back to step~\ref{step:repeat}.
\end{enumerate}

It can be observed that for each new $i$, there is a new $x \in \mcX$ such that $Q_i(x) = 0$. Hence the above process converges after at most $|\mcX|$ iterations.
At the end we have $\mu = \sum_{i=1}^{k} p_i P_i$.

Let us define $M\in [k]$ such that $\Pr[M=i] = p_i$. Let us define $XY \in \mcX \times \mcY$ correlated with $M$ such that $(XY | ~ M=i) \sim P_i$. It is easily checked that $XY \sim \mu$. Also since each $P_i$ is one-way for $\mu$, $XYM$ form a Markov chain $M \leftrightarrow X \leftrightarrow Y$.   Let $\theta$ be the distribution of $XYM$.
Let us define 
\begin{enumerate}
\item  $B = \{ (x,y,i) |~ \log \frac{P_i(x|y)}{\mu(x|y)} > c + \log \frac{1}{\delta} \}$, 
\item  $B_1 = \{ (x,y,i) |~ \log \frac{P_i(x|y)}{Q_i(x|y)} > c\},$
\item  $B_2 = \{ (x,y,i) |~ \frac{\mu(y)}{q_i Q_i(y)} > \frac{1}{ \delta} \} .$
\end{enumerate}
Since $q_iQ(x,y) \leq \mu(x,y)$,
\begin{align*}
\frac{P_i(x|y)}{\mu(x|y)} & = \frac{P_i(x|y)}{Q_i(x|y)} \cdot \frac{Q_i(x|y)}{\mu(x|y)} = \frac{P_i(x|y)}{Q_i(x|y)} \cdot \frac{Q(x,y) \mu(y)}{Q(y) \mu(x,y)} \leq \frac{P_i(x|y)}{Q_i(x|y)} \cdot \frac{\mu(y)}{q_i Q(y)} 
\end{align*}
Therefore $B \subseteq B_1 \cup B_2$. Since for each $i, ~ \cment^{Q_{i}}_\delta(P_i) \leq c$, we have 
$$\Pr_{(x,y,i) \leftarrow \theta}[(x,y,i) \in B_1] \leq \delta .$$
For a given $y$, let $i_y$ be the smallest $i$ such that $\frac{\mu(y)}{q_{i} Q_i(y)} > \frac{1}{ \delta}$. Then,
$$\Pr_{(x,y,i) \leftarrow \theta}[(x,y,i) \in B_2] = \sum_y q_{i_y} Q_{i_y}(y) < \sum_y \delta \mu(y) = \delta .$$ 
Hence, $\Pr_{(x,y,i) \leftarrow \theta}[(x,y,i) \in B] < 2\delta $. Finally note that,
\begin{align*}
\frac{P_i(x|y)}{\mu(x|y)}  & =  \frac{\theta(x|(y,i))}{\theta(x|y)} = \frac{\theta(x|y)\theta(i|(x,y))}{\theta(i|y)\theta(x|y)} = \frac{\theta(i|x)}{\theta(i|y)}  \enspace .
\end{align*}
\end{proofof}

\section{Strong direct product for one-way communication complexity}
\label{sec:dpt}

We start with the following theorem which we prove later. 
\begin{theorem}[Direct product in terms of $\ment$ and $\cment$]
\label{thm:dptment}
Let $f \subseteq \mcX \times \mcY \times \mcZ$ be a relation and $\mu \in \mcX \times \mcY$ be a distribution. Let $0 < 200 \sqrt{\delta} < \eps < 0.5$ and $k$ be a natural number.  Then
$$ \ment^{\mu^k}_{1 - (1- \eps/2)^{\floor {\delta k}}}(f^k)  \geq \delta \cdot k \cdot \cment^\mu_{\eps,\eps}(f) \enspace .$$
\end{theorem}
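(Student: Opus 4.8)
The plan is to establish the contrapositive of the statement about $\ment$. Set $c=\cment^\mu_{\eps,\eps}(f)$ and assume $c>0$ (otherwise the inequality is trivial, since $\ment$ is always nonnegative). I would take an arbitrary distribution $\lambda$ that is one-way for $\mu^k$ with $S_\infty(\lambda\|\mu^k)<\delta k c$ and show that $\err_{f^k}(\lambda)>1-(1-\eps/2)^{\floor{\delta k}}$; equivalently, that for every $g:\mcY^k\to\mcZ^k$ the probability under $\lambda$ that all $k$ coordinates are answered correctly by $g$ is less than $(1-\eps/2)^{\floor{\delta k}}$. The only feature of $c$ that I use is this immediate consequence of its definition: any $\nu$ on $\mcX\times\mcY$ that is one-way for $\mu$ with $\cment^\mu_\eps(\nu)<c$ has $\err_f(\nu)>\eps$. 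Combining this with Lemma~\ref{lem:relations}, the chain rule (Fact~\ref{fact:relprop}(2)), which gives $\expct_{(x,y)\leftarrow\nu}\log\frac{\nu(x\mid y)}{\mu(x\mid y)}\leq S(\nu\|\mu)$, and a Markov-type inequality for the relative‑entropy density, one obtains the more convenient form: $S(\nu\|\mu)<\eps c$ already forces $\err_f(\nu)>\eps$.

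Because $\lambda$ is one-way for $\mu^k$ I can write $\lambda(x,y)=\lambda_X(x)\prod_{j=1}^k\mu(y_j\mid x_j)$, so $S_\infty(\lambda\|\mu^k)=\max_x\log\frac{\lambda_X(x)}{\prod_j\mu_X(x_j)}$. The heart of the argument is a conditioning that respects one-wayness. For a coordinate $i$, I would condition $\lambda$ on $X_{<i}=x_{<i}$ (Alice's inputs on the prefix) and on $Y_j=y_j$ for every $j\neq i$ (Bob's inputs on all the other coordinates). Two facts then hold. First, under this conditioning the marginal law of $(X_i,Y_i)$, call it $\tau_i=\tau_i^{x_{<i},\,y_{\neq i}}$, is still one-way for $\mu$: the conditional law of $Y_i$ given $X_i$ remains $\mu(\cdot\mid X_i)$, since $Y_i$ interacts with the rest only through $X_i$. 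Second, once $y_{\neq i}$ is fixed, $g(y)_i$ is a deterministic function of $y_i$ alone, so the conditional probability that coordinate $i$ is correct is at most $1-\err_f(\tau_i)$. Hence, whenever $\cment^\mu_\eps(\tau_i)<c$, coordinate $i$ is correct with probability at most $1-\eps$ under this conditioning.

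Next I would show that $\tau_i$ is typically this clean for enough coordinates and assemble the bound. The chain rule applied to $S(\lambda_X\|\mu_X^k)\leq S_\infty(\lambda\|\mu^k)<\delta k c$ expresses this quantity as $\sum_{i=1}^k\expct_{x_{<i}}S(\lambda_{X_i\mid X_{<i}=x_{<i}}\|\mu_X)$, and a parallel estimate on the multi-information of $\lambda_X$ (also at most $S_\infty(\lambda\|\mu^k)$) controls how much the conditioning on $y_{\neq i}$ can reweight the residual at coordinate $i$ away from $\lambda_{X_i\mid X_{<i}}$. A Markov/averaging argument then shows that for a $1-O(\delta/\eps)$ fraction of coordinates $i$, and for all but a small fraction of the conditioning data $(x_{<i},y_{\neq i})$, one has $S(\tau_i\|\mu)<\eps c$, hence $\err_f(\tau_i)>\eps$; absorbing the exceptional conditionings into the slack yields a conditional success probability of at most $1-\eps/2$ at each such coordinate. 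Ordering the coordinates suitably and telescoping these per-coordinate estimates produces $\Pr_\lambda[\text{all correct under }g]<(1-\eps/2)^{\floor{\delta k}}$, which is the required conclusion. The hypothesis $200\sqrt{\delta}<\eps$ is exactly what lets all the $O(\delta/\eps)$-type losses, together with a Pinsker-type passage from $S(\tau_i\|\mu)$ to an $\ell_1$ bound via Fact~\ref{fact:relprop}(3), sit inside the $\eps$-sized margins.

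The main obstacle is precisely that the decoder $g$ depends on all of Bob's inputs and not only on the coordinate in question; this is what forces the conditioning on $y_{\neq i}$. Two points then need care: that this conditioning keeps the coordinate-$i$ marginal one-way for $\mu$, so that $c=\cment^\mu_{\eps,\eps}(f)$ applies, and — the delicate one — that it does not secretly help the decoder on coordinate $i$, which demands controlling the residual's deviation from $\mu$ uniformly over the conditioning in terms of only a per-coordinate portion of the total budget $\delta k c$ rather than all of it; a naive change of measure over all coordinates at once would cost the prohibitive factor $2^{\delta k c}$, so the chain-rule bookkeeping is indispensable. Arranging the coordinate order and the nested conditioning so that the per-coordinate estimates telescope into a genuine product of $\floor{\delta k}$ factors — rather than collapsing into one — is the step that requires the most care.
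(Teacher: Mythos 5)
Your high‑level plan (contrapositive, pass to a per‑coordinate conditional distribution that remains one‑way for $\mu$, argue via $\err_f$ that a chosen coordinate fails with probability roughly $\eps$, and telescope over $\floor{\delta k}$ coordinates) matches the paper's, and your observation that conditioning on $X_{<i}$ and $Y_{\neq i}$ preserves one‑wayness of the coordinate‑$i$ residual is correct. However, there is a genuine gap exactly at the step you flag as needing the most care. The telescoping requires bounding $\Pr[T_{i_{r+1}}=1 \mid T_{i_1}\cdots T_{i_r}=1]$, not the marginal $\Pr[T_{i_{r+1}}=1]$. Your conditioning data $(x_{<i}, y_{\neq i})$ does not determine the event $T_{i_1}\cdots T_{i_r}=1$: because the decoder is $g(y)_{i_j}$ and $g$ reads all of $y$ including $y_i$, the past indicators $T_{i_j}$ depend on $Y_i$, which you leave unconditioned. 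So the factorization you need does not go through with this conditioning. The paper sidesteps this by conditioning on the \emph{event} $T=1$ (not on values) to form $X^1Y^1=(XY\mid T=1)$, which costs at most $\delta k$ bits because $\Pr[T=1]>2^{-\delta k}$; this extra budget term (the first summand in the paper's chain $\delta k + \delta c k > \cdots$) is entirely absent from your accounting, and without it there is no way to re‑apply the per‑coordinate analysis to the success‑conditioned distribution.

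The second soft spot is your ``parallel estimate on the multi‑information of $\lambda_X$'' to control how conditioning on $Y_{\neq i}$ reweights coordinate $i$. You assert this in one clause, but the quantity you need to bound is precisely what the paper's auxiliary variables $D,U$ are constructed to handle: by revealing, at each coordinate not in $C\cup\{j\}$, a uniformly random one of $X_i$ or $Y_i$, the chain rule spreads both budget terms (the $\delta k c$ from $S_\infty(\lambda\|\mu^k)$ and the $\delta k$ from conditioning on $T=1$) across coordinates and onto both the $X$‑marginal and the $Y$‑marginal simultaneously; a Markov argument then yields a single coordinate $j$ where both deviations are small, after which a Pinsker step and the surrogate $X^2Y^2$ transfer $\err_f\geq\eps$ back to $\lambda$. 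Your fixed choice of always revealing $Y_j$ for $j>i$ does not obviously admit a chain‑rule decomposition of $S_\infty(\lambda\|\mu^k)$ into a sum of per‑coordinate $X$‑side terms of the required form, and no replacement lemma is supplied. In short, the two specific devices the paper's proof hinges on --- conditioning on the success event $T=1$ with its $\delta k$ cost, and the random $D,U$ correlation‑breaking --- are both missing, and without them the telescoping and the budget bookkeeping do not close.
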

We now state and prove our main result.
\begin{theorem}[Direct product for one-way communication complexity]
\label{thm:dpt}
Let $f \subseteq \mcX \times \mcY \times \mcZ$ be a relation. Let $0 < 200\sqrt{\delta} < \eps < 0.5$  and $k$ be a natural number.   Let $\delta' =  (1- \eps/10)^{\floor {\delta k}} + 2^{-k}$. There exists a constant $\kappa$ such that, 
$$ \sR^{1,\pub}_{1 - \delta'}(f^k)  \geq  \frac{\delta \cdot k}{\kappa} \cdot \sR^{1,\pub}_\eps(f)  - k\enspace .$$
In other words,
$$ \sR^{1,\pub}_{1 - 2^{-\Omega(k)}}(f^k)  \geq   \Omega( k \cdot \sR^{1,\pub}_{1/3}(f)  )  \enspace . $$
\end{theorem}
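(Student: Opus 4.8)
The plan is to chain together Theorem~\ref{thm:dptment} with the new characterization (Theorem~\ref{thm:newchar}) and the amplification fact that the error of a public-coin protocol can be driven below any constant at the cost of a constant factor in communication. First I would fix a distribution $\mu$ achieving (up to the constant in Theorem~\ref{thm:newchar}) the value $\cment^\mu_{\eps,\eps}(f) = \Theta(\sR^{1,\pub}_\eps(f))$; more precisely, by Theorem~\ref{thm:newchar} applied with error parameter $\eps$, we have $\sR^{1,\pub}_\eps(f) \leq \cment_{\eps/5,\eps/5}(f) + O(\log(1/\eps))$, so there is a $\mu$ with $\cment^\mu_{\eps/5,\eps/5}(f) \geq \Omega(\sR^{1,\pub}_\eps(f))$ (absorbing the additive $\log(1/\eps)$ term, which is harmless since we may assume $\sR^{1,\pub}_\eps(f)$ is at least a large constant, else the statement is vacuous). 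Relabelling $\eps/5$ as the working error parameter is fine because the hypothesis $200\sqrt{\delta} < \eps$ only gets easier; alternatively one keeps track of the factor $5$ inside $\kappa$.

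Next I would feed this $\mu$ into Theorem~\ref{thm:dptment}, which gives
$$ \ment^{\mu^k}_{1 - (1-\eps/2)^{\floor{\delta k}}}(f^k) \geq \delta \cdot k \cdot \cment^\mu_{\eps,\eps}(f) \geq \delta\cdot k \cdot \Omega\big(\sR^{1,\pub}_\eps(f)\big). $$
Then the lower-bound half of Theorem~\ref{thm:newchar} (equivalently Lemma~\ref{lem:Dgeqcrent} together with Lemma~\ref{lem:yao}), applied to the relation $f^k$ and the distribution $\mu^k$, converts the $\ment$ bound into a communication lower bound: since $\sR^{1,\pub}_{2\gamma}(f^k) \geq \ment_{2\gamma}(f^k) - 1 \geq \ment^{\mu^k}_{2\gamma}(f^k) - 1$, taking $2\gamma = 1-(1-\eps/2)^{\floor{\delta k}}$ — or rather using Lemma~\ref{lem:Dgeqcrent} directly with a suitable choice of its parameter $k$ to handle the exact form of the error — yields
$$ \sR^{1,\pub}_{1-(1-\eps/2)^{\floor{\delta k}}}(f^k) \geq \delta\cdot k\cdot\Omega\big(\sR^{1,\pub}_\eps(f)\big) - k. $$
Finally, to get the error bound into the advertised shape $1-\delta'$ with $\delta' = (1-\eps/10)^{\floor{\delta k}} + 2^{-k}$, I would use the error-reduction step: a protocol for $f^k$ succeeding with probability $\delta'$ can, at the cost of a constant-factor blow-up in communication, be converted to one succeeding with probability $(1-\eps/2)^{\floor{\delta k}}$ whenever $\delta' \le (1-\eps/2)^{\floor{\delta k}}$ — which holds because $(1-\eps/10)^{\floor{\delta k}} + 2^{-k} \le (1-\eps/2)^{\floor{\delta k}}$ for the stated range of parameters (this is where the $2^{-k}$ slack and the gap between $\eps/10$ and $\eps/2$ are spent). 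Folding that constant into $\kappa$ gives the first displayed inequality. The second, cleaner statement then follows by specialising: take $\eps = 1/3$, $\delta$ a small enough absolute constant with $200\sqrt{\delta} < 1/3$, and observe $\delta' = (1-1/30)^{\floor{\delta k}} + 2^{-k} = 2^{-\Omega(k)}$ while $\delta k/\kappa = \Omega(k)$ and the additive $-k$ is absorbed since $\sR^{1,\pub}_{1/3}(f)$ may be assumed larger than any fixed constant.

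The routine parts are the arithmetic comparisons between $(1-\eps/10)^{m}+2^{-m}$ and $(1-\eps/2)^{m}$ and the bookkeeping of constants through the error-reduction step; the genuine content is entirely in Theorem~\ref{thm:dptment}, whose proof (not attempted here) must establish a strong direct product for the $\ment$ bound under the \emph{product} distribution $\mu^k$ starting only from a single-copy $\cment$ lower bound for $f$ — the main obstacle being to argue that on a $\delta$-fraction of the $k$ coordinates the protocol is forced to "pay" $\cment^\mu_{\eps,\eps}(f)$ worth of relative min-entropy independently, conditioned on doing well on the earlier coordinates, which is the usual heart of strong direct product arguments and is where the robustness built into $\cment$ (as opposed to the brittle $\ment$) is essential.
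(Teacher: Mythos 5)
Your plan is the paper's proof, with the same chain of lemmas: fix a distribution $\mu$ achieving the maximum in $\cment_{\eps/5,\eps/5}(f)$, use Lemma~\ref{lem:Dleqcrent} together with Yao (Lemma~\ref{lem:yao}) to get $\sR^{1,\pub}_\eps(f)\le\kappa\cdot\cment^\mu_{\eps/5,\eps/5}(f)$, feed $\mu$ into Theorem~\ref{thm:dptment}, and then return to communication for $f^k$ via Lemma~\ref{lem:Dgeqcrent} and Yao applied to $\mu^k$.

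Your final ``error-reduction'' step, however, rests on a false inequality: $(1-\eps/10)^{\floor{\delta k}}+2^{-k}\le(1-\eps/2)^{\floor{\delta k}}$ cannot hold, since $1-\eps/10>1-\eps/2$ makes the left side the larger one. No error reduction is needed, and the exponent base $1-\eps/10$ (not $1-\eps/2$) already falls out of applying Theorem~\ref{thm:dptment} with working error $\eps/5$ rather than $\eps$ (this is forced on you because Lemma~\ref{lem:Dleqcrent} gives $\sD^{1,\mu_1}_{\eps'+4\delta''}(f)\le\cment_{\eps',\delta''}(f)+O(\log(1/\delta''))$, so to match $\sR^{1,\pub}_\eps(f)$ you must take $\eps'=\delta''=\eps/5$; your display that mixes $\cment^\mu_{\eps,\eps}$ with the exponent $1-\eps/2$ is therefore not what you can actually invoke). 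Concretely, the paper applies Lemma~\ref{lem:Dgeqcrent} with its slack parameter set to $k$, giving $\sD^{1,\mu^k}_{\gamma(1-2^{-k})}(f^k)\ge\ment^{\mu^k}_\gamma(f^k)-k$ with $\gamma=1-(1-\eps/10)^{\floor{\delta k}}$; since $1-\delta'\le\gamma(1-2^{-k})$, monotonicity of $\sD$ in its error subscript and Yao finish the chain. You do flag this route in passing as the ``alternative,'' so the needed idea is present — but it is the whole step, not an option. A smaller note: your remark that relabelling $\eps/5$ makes the constraint $200\sqrt\delta<\eps$ ``easier'' has the direction reversed; applying Theorem~\ref{thm:dptment} at $\eps/5$ needs the stronger $200\sqrt\delta<\eps/5$, a constant-factor point the paper also elides.
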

\begin{proof}
Let $\mu_1$ be a distribution such that $\sD^{1,\mu_1}_\eps(f) = \sR^{1,\pub}_\eps(f)$. Let $\mu$ be a distribution such that $\cment_{\eps/5,\eps/5}^{\mu}(f) = \cment_{\eps/5,\eps/5}(f)$.  Let $\kappa$ be a constant (guaranteed by Lemma~\ref{lem:Dleqcrent}) such that $ \sD^{1,\mu_1}_\eps(f)  \leq \kappa \cdot \cment_{\eps/5,\eps/5} (f) $.
Using Lemma~\ref{lem:Dgeqcrent}, Lemma~\ref{lem:Dleqcrent} and Theorem~\ref{thm:dptment},
\begin{align*}
\frac{\delta \cdot k}{\kappa} \cdot \sR^{1,\pub}_\eps(f) &= \frac{\delta \cdot k}{\kappa} \cdot \sD^{1,\mu_1}_\eps(f) \\
& \leq \delta \cdot k \cdot \cment_{\eps/5,\eps/5} (f)  = \delta \cdot k  \cdot \cment^{\mu}_{\eps/5,\eps/5} (f) \\
& \leq  \ment^{{\mu}^k}_{1 - (1- \eps/10)^{\floor {\delta k }}}(f^k)  \leq \sD^{1,{\mu}^k}_{1 - (1- \eps/10)^{\floor {\delta k  }} - 2^{-k}}(f^k) + k \\
& \leq \sR^{1,\pub}_{1 - \delta'}(f^k) + k \enspace .
\end{align*}
\end{proof}

\begin{proofof}{Theorem~\ref{thm:dptment}}
Let $c = \cment^\mu_{\eps,\eps}(f) $. Let $\lambda \in \mcX^k \times \mcY^k$ be a distribution which is one-way for $\mu^k$ and with $S_\infty (\lambda || \mu^k) < \delta c k$. We show that $\err_{f^k}(\lambda) \geq 1 - (1 - \eps/2)^{\floor{ \delta k  }}$. This shows the desired. 

Let $B$ be a set. For a random variable distributed in $B^k$, or a string in $B^k$, the portion corresponding to the $i$th coordinate is represented with subscript $i$. Also the portion except the $i$th coordinate is represented with subscript $-i$. Similarly portion corresponding to a subset $C \subseteq [k]$ is represented with subscript $C$. For joint random variables $MN$, we let $M_n$ to represent $M |~ (N=n)$ and also $M N |~ (N=n)$ and is clear from the context.

Let $X Y \sim \lambda$. Let us fix $g: \mcY^k \rightarrow \mcZ^k$. For a coordinate $i$, let the binary random variable $T_i \in \{0,1\}$, correlated with $XY$, denote success in the $i$th coordinate.  That is $T_i = 1$ iff $XY = (x,y)$ such that $(x_i, y_i, g(y)_i) \in f$. 
We make the following claim which we prove later. Let $k' = \floor {\delta k  }$.
\begin{claim}
\label{claim:succ}
There exists $k'$ distinct coordinates $i_1, \ldots ,i_{k'}$  such that $\Pr[T_{i_1} = 1] \leq 1 - \eps/2$ and for each  $r < k' $,
\begin{enumerate}
\item either $\Pr[T_{i_1} \times  T_{i_2} \times \cdots \times T_{i_r} = 1 ] \leq (1 - \eps/2)^{k'}$,
\item or  $\Pr[T_{i_{r+1}} = 1 |~ (T_{i_1} \times  T_{i_2} \times \cdots \times T_{i_r} = 1)] \leq 1 - \eps/2 $. 
\end{enumerate}
\end{claim}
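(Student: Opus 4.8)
The plan is to build the sequence $i_1,\dots,i_{k'}$ greedily, keeping track of the event $E_r:=[T_{i_1}=\cdots=T_{i_r}=1]$ (with $E_0$ the whole sample space), and to check that the two alternatives of the claim are exhaustive at each step. Suppose $i_1,\dots,i_r$ have been chosen with $r<k'$. If $\Pr[E_r]\le(1-\eps/2)^{k'}$, I stop and complete the list with arbitrary distinct unused coordinates; since $E_{r'}\subseteq E_r$ for all $r'\ge r$, the first alternative then holds at every step from $r$ onward (and the base requirement $\Pr[T_{i_1}=1]\le 1-\eps/2$ will already have been met, being the $r=0$ instance of the next case). Otherwise $\Pr[E_r]>(1-\eps/2)^{k'}$, and I pick for $i_{r+1}$ any fresh coordinate with $\Pr[T_{i_{r+1}}=1\mid E_r]\le 1-\eps/2$, which makes the second alternative hold at step $r$. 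Because each use of the second rule multiplies $\Pr[E_{\cdot}]$ by at most $1-\eps/2$, after at most $k'$ such steps one has $\Pr[E_{k'}]\le(1-\eps/2)^{k'}$; so the procedure stops after producing exactly $k'$ coordinates. Everything thus reduces to the following.

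\textbf{Key step.} If $r<k'$ and $\Pr_\lambda[E_r]>(1-\eps/2)^{k'}$, then some fresh coordinate $i$ satisfies $\Pr_\lambda[T_i=0\mid E_r]\ge\eps/2$.

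I would prove the key step by contradiction: assume that all $k-r\ge(1-\delta)k$ fresh coordinates $i$ have $\Pr_\lambda[T_i=0\mid E_r]<\eps/2$, and derive $S_\infty(\lambda\|\mu^k)\ge\delta c k$ with $c=\cment^\mu_{\eps,\eps}(f)$, contradicting the hypothesis $S_\infty(\lambda\|\mu^k)<\delta c k$ under which Theorem~\ref{thm:dptment} is being proved. The engine is a per-coordinate use of the bound $c$. Here is where one-wayness of $\lambda$ for the \emph{product} $\mu^k$ is used: given $X$, the blocks $Y_1,\dots,Y_k$ are independent with $Y_j\mid X\sim\mu(\cdot\mid X_j)$, so conditioning $\lambda$ on $Y_{-i}$ leaves the $(X_i,Y_i)$-marginal one-way for $\mu$, while $y_i\mapsto g(y_i,Y_{-i})_i$ becomes an honest function $\mcY\to\mcZ$ whose error under that marginal is $\Pr_\lambda[T_i=0\mid Y_{-i}]$. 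When this error is at most $\eps$, the definition of $c$ forces $\Pr_\lambda[\log\frac{\lambda(x_i\mid y)}{\mu(x_i\mid y_i)}\ge c\mid Y_{-i}]\ge\eps$, i.e.\ coordinate $i$ ``spends'' $c$ bits of conditional relative min-entropy on a probability-$\ge\eps$ event, recorded inside $\log\frac{\lambda(x_i\mid y)}{\mu(x_i\mid y_i)}$. I would then aggregate these per-coordinate spends, over the $\gtrsim(1-\delta)k$ fresh coordinates, into a lower bound on $S_\infty(\lambda\|\mu^k)$ — using the identity $\log\frac{\lambda(x,y)}{\mu^k(x,y)}=\log\frac{\lambda(x)}{\mu^k_X(x)}$ valid for one-way $\lambda$, together with a chain-rule and union-bound bookkeeping for the error terms in the style of the $B_1,B_2$ decomposition in the proof of Claim~\ref{claim:key} — to get $S_\infty(\lambda\|\mu^k)=\Omega(\eps(1-\delta)kc)$, which exceeds $\delta c k$ with room to spare since $\eps>200\sqrt\delta$.

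\textbf{The main obstacle.} The assumption driving the key step controls $\Pr_\lambda[T_i=0\mid E_r]$, not $\Pr_\lambda[T_i=0\mid Y_{-i}]$: to make coordinate $i$'s error small one must condition on the success event $E_r$ in the other chosen coordinates, but because $g$ mixes all the blocks, $E_r$ depends on $Y_i$, and conditioning on it destroys the one-wayness for $\mu$ that the bound $c$ requires. Reconciling this is the hard part, and it is exactly what the \emph{robust} form of the measure is for: working throughout with $\lambda\mid E_r$, one pays at most $\log\frac{1}{(1-\eps/2)^{k'}}\le\eps\delta k$ of extra relative min-entropy for conditioning on $E_r$ (negligible against the $\eps c k$ gained), and one uses the $\delta$-slack in $\cment^\mu_\delta(\cdot)$, together with the explicit way conditioning on $E_r$ acts on a single block, to relate the (non-one-way) conditioned marginal in coordinate $i$ to an honestly one-way-for-$\mu$ distribution for the purpose of invoking $c$, losing only an $O(\eps)$-fraction of the fresh coordinates in the process. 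Carrying out this reduction uniformly across the fresh coordinates, and the subsequent aggregation into $S_\infty(\lambda\|\mu^k)$, is the technical heart of the proof; the threshold $200\sqrt\delta$ is then just enough slack — the $\sqrt{\cdot}$ coming from the Pinsker-type steps needed to convert the per-coordinate relative-entropy estimates back into statements about the events $T_i$ — for the crude bookkeeping to close.
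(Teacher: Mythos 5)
Your high-level plan is on the right track: the greedy construction, the contrapositive framing ("if many successes remain, some fresh coordinate still errs a lot"), the observation that conditioning on $E_r$ destroys the one-wayness of the coordinate marginals, the idea that the $\delta$-robustness in $\cment$ is what absorbs the $\log\frac{1}{\Pr[E_r]}\le\delta k$ cost of that conditioning, and the Pinsker step — all of these appear in the paper's proof. You have correctly diagnosed the central obstacle.

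But the proposal does not resolve that obstacle; it names it and then says the resolution is "the technical heart of the proof" without supplying the mechanism. The device the paper uses is a random-split auxiliary variable: introduce $D$ uniform in $\{0,1\}^k$ (independent of $X^1Y^1 := XY\mid E_r$) and $U_i = X^1_i$ if $D_i = 0$, $U_i = Y^1_i$ if $D_i = 1$, then condition on $(D,U,X^1_C,Y^1_C)$. Chain-ruling $S(X^1Y^1\|(X'Y')^{\otimes k})$ and $S(X^1Y^1\|XY)$ against this conditioning gives, for a coordinate $j$ found by a single Markov argument (not by summing over all fresh coordinates), simultaneously: (a) conditioning further on $D_j=1$ (fixing $y_j$) bounds $\expct S((X^1_{\cdots,y_j})_j\|X'_{y_j})$, the quantity that feeds the $\cment$ test; and (b) conditioning on $D_j=0$ (fixing $x_j$) bounds $\expct S((Y^1_{\cdots,x_j})_j\|Y'_{x_j})$, which via Pinsker gives $\ell_1$-closeness of the conditioned block to an honestly one-way $X^2Y^2$. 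It is exactly this two-sided split — handling the "$X$ direction" and "$Y$ direction" with the same auxiliary $(D,U)$ conditioning — that lets you trade the broken one-wayness of $(X^1Y^1)_{\ldots,j}$ for a nearby genuinely one-way distribution, and then invoke $\cment^\mu_\eps(X^2Y^2)\le c$ to force $\err_f(X^2Y^2)\ge\eps$. Your proposal nowhere constructs anything playing the role of $D,U$, and the phrase "the explicit way conditioning on $E_r$ acts on a single block" does not substitute for it: conditioning on $E_r$ couples $X_j$ and $Y_j$ through $g$, and there is no single per-block identity that undoes that.

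A second, more structural concern: your aggregation scheme proposes to \emph{lower-bound} $S_\infty(\lambda\|\mu^k)$ by summing per-coordinate "$c$-bit spends," each of which is only a probability-$\ge\eps$ event. But $S_\infty$ is a worst-case (max) quantity, and a union of per-coordinate low-probability log-ratio events does not, without a concentration argument you have not given, produce a single $(x,y)$ with a large total log-ratio. The paper avoids this entirely by working downward: it only ever \emph{upper}-bounds expected Shannon relative entropies (using $S\le S_\infty$ once, at the top), chain-rules them into per-coordinate pieces, and extracts one good coordinate by Markov. That direction of the argument is the one that closes; yours, as sketched, does not obviously close without additional machinery.
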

This shows that the overall success is 
$$ \Pr[T_1 \times T_2 \times \cdots \times T_k  = 1] \leq \Pr[T_{i_1} \times T_{i_2} \times \cdots \times T_{i_{k'}}  = 1] \leq (1-\eps/2)^{k'} \enspace .$$
\end{proofof}

\begin{proofof}{Claim~\ref{claim:succ}} 
\suppress{
We start by identifying $i_1$. Consider,
\begin{align*}
\delta c k  > S_\infty(\lambda || \mu^k) 
& \geq \crent^{\mu^k}(\lambda) = \expct_{y \leftarrow Y}  S(X_{y}  || X'_{y_1} \otimes \ldots \otimes X'_{y_k}  ) \\
& \geq \expct_{y \leftarrow Y}  \sum_{i=1}^k S((X_y)_i  || X'_{y_i}  ) =\sum_{i=1}^k  \expct_{y \leftarrow Y}   S((X_y)_i  || X'_{y_i}  )  \enspace .
\end{align*}
Therefore there exists a coordinate $j$ such that $ \expct_{y \leftarrow Y}   S((X_y)_{j}  || X'_{y_{j}}) \leq \delta c$. Let 
$$G = \{ y_{-j}  | ~  \left( \expct_{y_{j} \leftarrow Y_{j} | (Y_{-j} = y_{-j})} S((X_y)_{j}  || X'_{y_{j}}) \right )  \leq c\} \enspace .$$ 
Therefore $\Pr \left[  Y_{-j}  \in G \right] \geq 1 -\delta \enspace .$ Fix $ y_{-j} \in G$. It can be observed that $(XY)_j | ~(Y_{-j} = y_{-j}) $ is one-way for $\mu$. Therefore $\Pr[T_j = 1 |~  (Y_{-j} = y_{-j}) ]  \leq 1 - \eps$.
Hence overall,
$$\Pr[T_j = 1 ]  \leq (1 - \eps)(1-\delta) + \delta  \leq 1 - \eps/2 \enspace . $$ 
Set $i_1 = j$. }
Let us say we have identified $r < k'$ coordinates $i_1, \ldots i_{r}$.  Let $C = \{ i_1, i_2, \ldots, i_r\}$. Let $T = T_{i_1} \times T_{i_2} \times \cdots \times T_{i_r}$ . If $\Pr[T=1] \leq (1- \eps/2)^{k'}$ then we will be done. So assume that $ \Pr[T=1]  > (1 - \eps/2)^{k'} \geq 2^{-\delta k } $. 

Let $X' Y' \sim \mu$. Let $X^1 Y^1  = (XY | ~ T = 1)$. Let  $D$ be uniformly distributed  in $\{0,1\}^k$ and independent of $X^1Y^1$. Let $U_i = X^1_i$ if $D_i=0$ and $U_i = Y^1_i$ if $D_i=1$.  Let $U = U_1 \ldots U_k$. Below for any random variable $\tilde{X}\tilde{Y}$, we let $\tilde{X}\tilde{Y}_{d,u}$, represent the random variable obtained by appropriate conditioning on $\tilde{X}\tilde{Y}$: for all $i$, $\tilde{X}_i = u_i$ if $d_i=0$ otherwise $\tilde{Y}_i = u_i$ if $d=1$ . 
 Consider,
\begin{align}
 \delta k + \delta c k   &>  S_\infty(X^1Y^1  || XY)  + S_\infty(XY || (X'Y')^{\otimes k} ) \nonumber \\
& \geq  S_\infty(X^1Y^1 || (X'Y') ^{\otimes k} )  \geq S(X^1Y^1 || (X'Y') ^{\otimes k} ) = \expct_{d \leftarrow D} S(X^1Y^1 || (X'Y') ^{\otimes k} ) \nonumber\\
& \geq \expct_{(d,u,x_C,y_C) \leftarrow (D U X^1_C Y^1_C)} S((X^1Y^1)_{d,u,x_C,y_C}  || ((X'Y')^{\otimes k})_{d,u,x_C,y_C} ) \nonumber \\
& \geq  \expct_{(d,u,x_C,y_C) \leftarrow (D U X^1_C Y^1_C)} S(X^1_{d,u,x_C,y_C}  || X'_{d_1,u_1,x_C,y_C} \otimes \ldots \otimes X'_{d_k,u_k,x_C,y_C}  )  \nonumber \\
& \geq  \expct_{(d,u,x_C,y_C) \leftarrow (D U X^1_C Y^1_C)} \sum_{i \notin C} S((X^1_{d,u,x_C,y_C})_i  || X'_{d_i,u_i}  ) \nonumber \\
 &= \sum_{i \notin C} \expct_{(d,u,x_C,y_C) \leftarrow (D U X^1_C Y^1_C)}  S((X^1_{d,u,x_C,y_C})_i  || X'_{d_i,u_i}  )   \enspace . \label{eq:1}
\end{align}
Also
\begin{align}
 \delta k &> S_\infty(X^1Y^1 || XY)   \geq S(X^1Y^1 || XY) = \expct_{d \leftarrow D}  S(X^1Y^1 || XY) \nonumber  \\
& \geq \expct_{(d,u,x_C,y_C) \leftarrow (D U X^1_C Y^1_C)} S(Y^1_{d,u,x_C,y_C} ~ || ~Y_{d_1,u_1,x_C,y_C} \otimes \ldots \otimes Y_{d_k,u_k,x_C,y_C}  )  \nonumber  \\
&  \geq \expct_{(d,u,x_C,y_C) \leftarrow (D U X^1_C Y^1_C)} \sum_{i \notin C} S((Y^1_{d,u,x_C,y_C})_i  || Y_{d_i,u_i} )  \nonumber  \\
& =  \sum_{i \notin C} \expct_{(d,u,x_C,y_C) \leftarrow (D U X^1_C Y^1_C)} S((Y^1_{d,u,x_C,y_C})_i  || Y'_{d_i,u_i} )  \enspace .\label{eq:2} 
\end{align}
From Eq.~\ref{eq:1} and Eq.~\ref{eq:2} and using Markov's inequality we get a coordinate $j$ outside of $C$ such that 
\begin{enumerate}
\item $ \expct_{(d,u,x_C,y_C) \leftarrow (D U  X^1_C Y^1_C)}  S((X^1_{d,u,x_C,y_C})_j  || X'_{d_j,u_j}  )  \leq \frac{2 \delta(c+1)}{(1-\delta)} \leq 4 \delta c  ,$ and
\item $ \expct_{(d,u,x_C,y_C) \leftarrow (D U  X^1_C Y^1_C)} S((Y^1_{d,u,x_C,y_C})_j  || Y'_{d_j,u_j} ) \leq \frac{2 \delta }{(1-\delta) } \leq 4 \delta$.
\end{enumerate}
Therefore,
\begin{align*}
4 \delta c  &\geq \expct_{(d,u,x_C,y_C) \leftarrow (D U  X^1_C Y^1_C)}  S((X^1_{d,u,x_C,y_C})_j  || X'_{d_j,u_j}  ) \\
& = \expct_{(d_{-j},u_{-j},x_C,y_C) \leftarrow (D_{-j} U_{-j}  X^1_C Y^1_C)} \expct_{(d_j,u_j) \leftarrow (D_j U_j)  | ~(D_{-j} U_{-j}  X^1_C Y^1_C) = (d_{-j},u_{-j},x_C,y_C)  }   S((X^1_{d,u,x_C,y_C})_j  || X'_{d_j,u_j}  )    .
\end{align*}
 And,
\begin{align*}
4 \delta &\geq \expct_{(d,u,x_C,y_C) \leftarrow (D U X^1_C Y^1_C)}   S((Y^1_{d,u,x_C,y_C})_j  || Y'_{d_j,u_j} ) \\
& = \expct_{(d_{-j},u_{-j},x_C,y_C) \leftarrow (D_{-j} U_{-j} X^1_C Y^1_C)} \expct_{(d_j,u_j) \leftarrow (D_j U_j )  | ~(D_{-j} U_{-j} X^1_C Y^1_C) = (d_{-j},u_{-j},x_C,y_C)  }  S((Y^1_{d,u,x_C,y_C})_j  || Y'_{d_j,u_j} )  .
\end{align*}
Now using Markov's inequality, there exists set $G_1 $ with $\Pr[Y^1_{-j} \in G_1] \geq 1 - 0.2 $,  such that for all $(d_{-j},u_{-j},x_C,y_C) \in G_1$,
\begin{enumerate}
\item \label{it:1} $ \expct_{(d_j,u_j) \leftarrow (D_j U_j )  | ~(D_{-j} U_{-j} X^1_C Y^1_C) = (d_{-j},u_{-j},x_C,y_C)  }     S((X^1_{d,u,x_C,y_C})_j  || X'_{d_j,u_j}  )    \leq 40 \delta c$, \quad and
\item \label{it:2} $ \expct_{(d_j,u_j) \leftarrow (D_j U_j )  | ~(D_{-j} U_{-j} X^1_C Y^1_C) = (d_{-j},u_{-j},x_C,y_C)  }     S((Y^1_{d,u,x_C,y_C})_j  || Y'_{d_j,u_j} ) \leq 40 \delta $.
\end{enumerate}
Fix $(d_{-j},u_{-j},x_C,y_C) \in G_1$. Conditioning on $D_j=1$ (which happens with probability $1/2$) in inequality~\ref{it:1}. above we get,
\begin{equation}
\expct_{y_j \leftarrow Y^1_j  | (D_{-j} U_{-j} X^1_C Y^1_C) = (d_{-j},u_{-j},x_C,y_C)   }   S((X^1_{d_{-j},u_{-j},y_j,x_C,y_C})_j  || X'_{y_j}  )    \leq 80 \delta c .
\label{eq:3}
\end{equation}
Conditioning on $D_j=0$ (which happens with probability $1/2$) in inequality~\ref{it:2}. above we get,
\begin{equation*}
\expct_{x_j \leftarrow X^1_j  | (D_{-j} U_{-j} X^1_C Y^1_C) = (d_{-j},u_{-j},x_C,y_C)   }  S((Y^1_{d_{-j},u_{-j},x_j,x_C,y_C})_j  || Y'_{x_j} )  \leq 80 \delta  .
\end{equation*}
Using concavity of square root we get,
\begin{equation} \expct_{x_j \leftarrow X^1_j  |(D_{-j} U_{-j} X^1_C Y^1_C) = (d_{-j},u_{-j},x_C,y_C)  }  ||(Y^1_{d_{-j},u_{-j},x_j,x_C,y_C})_j  -  Y'_{x_j} ||_1  \leq \sqrt{80 \delta}  .
\label{eq:4}
\end{equation}
Let $X^2Y^2$ be such that $X^2 \sim (X^1_{d_{-j},u_{-j},x_C,y_C})_j$ and $(Y^2 | ~X^2=x_j) \sim  Y'_{x_j}$. From Eq.~\ref{eq:4} we get,
\begin{equation} \label{eq:5} ||X^2Y^2 - ((X^1Y^1)_{d_{-j},u_{-j},x_C,y_C})_j  ||_1 \leq \sqrt{80 \delta} . \end{equation} 
From construction $X^2Y^2$ is one-way for $\mu$.  Using using Eq.~\ref{eq:3} and Eq.~\ref{eq:5}  we conclude that 
$$ \Pr_{(x,y) \leftarrow X^2Y^2}\left[\log \frac{X^2Y^2(x|y)}{\mu(x|y)}  > c\right] \leq 100 \delta + \sqrt{80 \delta} \leq \eps .$$
Hence $\cment^{\mu}_{\eps}(X^2Y^2) \leq c$. Hence, $\err_f(X^2Y^2) \geq \eps $ and therefore
\begin{align*}
\err_f(((X^1Y^1)_{d_{-j},u_{-j},x_C,y_C})_j) \geq   \eps -  \sqrt{80 \delta} \geq \frac{3\eps}{4} .
\end{align*}
Since conditioned on $(Y^1_{d_{-j},u_{-j},x_C,y_C})_j$, the distribution $(X^1Y^1)_{d_{-j},u_{-j},x_C,y_C}$ is product across the $\mcX^k$ and $\mcY^k$ parts, we have, 
\begin{align*}
\Pr[T_j = 1|~ (1,d_{-j},u_{-j},x_C,y_C) = (T  D_{-j} U_{-j} X_C Y_C)] & \leq 1 - \err_f(((X^1Y^1)_{d_{-j},u_{-j},x_C,y_C})_j) .
\end{align*}
Therefore overall 
$$\Pr[T_j  = 1 | ~ (T=1)] \leq 0.8(1 -  \frac{3\eps}{4})  + 0.2  \leq 1 - \eps/2 .$$
\end{proofof}

\bibliographystyle{alpha}

\bibliography{sdpone}

\end{document}